\newcommand\mc{\mathcal}
\newtheorem{theorem}{Theorem}[subsection]
\newtheorem{lemma}{Lemma}[subsection]
\newtheorem{example}{Example}[subsection]
\title{Noisy Nonadaptive Group Testing with Binary Splitting: New Test Design and Improvement on Price-Scarlett-Tan's Scheme}
\author{Xiaxin Li \and Arya Mazumdar
\thanks{The authors are with the Hal{\i}c{\i}o\u{g}lu Data Science Institute, University of California, San Diego. emails: \texttt{\{xil095, arya\}@ucsd.edu}.
}}
\date{}
\begin{document}
\maketitle
\begin{abstract}
In Group Testing, the objective is to identify $K$ defective items out of $N$, $K\ll N$, by testing pools of items together and using the least amount of tests possible. Recently, a fast decoding method based on {\em binary splitting} (Price and Scarlett, 2020) has been proposed that simultaneously achieve optimal number of tests and decoding complexity for Non-Adaptive Probabilistic Group Testing (NAPGT). However, the method works only when the test results are noiseless. In this paper, we further study the binary splitting method and propose (1) A NAPGT scheme that generalizes the original binary splitting method from the noiseless case into tests with $\rho$ proportion of false positives (the $\rho$-False Positive Channel), where $\rho$ is a constant, with asymptotically-optimal number of tests and decoding complexity, i.e. $\mc{O}(K\log N)$, and (2) A NAPGT scheme in the presence of both false positives and false negatives in test outcomes, improving and generalizing the work of Price, Scarlett and Tan~\cite{price2023fast} in two ways: First, under $\rho$-proportion of test results flipped ($\rho$-Binary Symmetric Channel) and within the general sublinear regime $K=\Theta(N^\alpha)$ where $0<\alpha<1$, our algorithm has a decoding complexity of $\mc{O}(\epsilon^{-2}K^{1+\epsilon})$ where $\epsilon>0$ is a constant parameter. 
Second, when the false negative flipping probability $\rho'$ satisfies $\rho'=\mc{O}(K^{-\epsilon})$ and the false positive flipping probability $\rho$ is a constant, we can simultaneously achieve
$\mc{O}(\epsilon^{-1}K\log N)$ for both the number of tests and the decoding complexity. It remains open to achieve these optimals under the general BSC. 
\end{abstract}
\section{Introduction}
Originally proposed by Dorfman \cite{dorfman1943detection} during World War II to detect syphilis among soldiers, \emph{Group Testing (GT)} refers to the process of identifying a certain number of defective items out of a set of given items, by performing tests on pools of items and finding the defective items by the results of the tests. A test result is positive if and only if the tested pool contains at least one defective item.   Compared to individual testing, where each pool contains only one item, GT is significantly more efficient, in terms of the number of tests and the decoding complexity. Throughout the years, the GT techniques have been successfully applied in various fields, such as detecting COVID-19 \cite{hogan2020sample,verdun2021group,yelin2020evaluation,brust2023effective}, compressed sensing \cite{matsumoto2023improved,matsumoto2024robust,gilbert2007one,gilbert2008group}, DNA sequencing \cite{hwang2006pooling,cao2016combinatorial}, wireless communication \cite{berger1984random,luo2008neighbor}, etc. 

Depending on the applications, group testing can be performed either adaptively, where the subsequent tests depend on the outcome of previous tests, or nonadaptively, where all tests are performed simultaneously. In this paper, we restrict ourselves to the nonadaptive version of group testing. Also, our pooling designs are guaranteed to exactly recover defectives with high probability; this is also known as probabilistic group testing or for-each recovery.
 We exclusively focus on \emph{non-adaptive probabilistic group testing (NAPGT)} in this paper. 

In recent years, there has been a significant number of publications dealing with NAPGT \cite{aldridge2019group,atia2012boolean,cheraghchi2020combinatorial,cai2017efficient,price2023fast,price2020fast,lee2019saffron,flodin2021probabilistic,mazumdar2015nonadaptive,wang2023quickly}. Under the noiseless settings, where the test results are perfectly reliable,  Price and Scarlett \cite{price2020fast} and Cheragchi and Nakos \cite{cheraghchi2020combinatorial} simultaneously proposed the \emph{fast binary splitting algorithm} that achieves the theoretically optimal asymptotic bound $\mc{O}(K\log N)$, where \emph{$N$ is the total number of items} and \emph{$K$ is the upper bound of the number of defective items}, for both the number of tests and the decoding complexity, a significant result showing that the asymptotic-optimality of number of tests and decoding complexity can be simultaneously achieved. Very recently, Wang, Gabrys, and Guruswami \cite{wang2023quickly} generalized the result of \cite{price2020fast} so that the constant factor of the number of tests is smaller: from the constant 16 to a constant arbitrarily close to $\log(2)^{-2}$. 

However, under the noisy settings, where the test results may flip from negative (0) to positive (1) with probability $\rho$, called a \emph {$\rho$-False Positive Channel ($\rho$-FPC)}, or the other way, from 1 to 0 with probability $\rho'$, called a \emph{$\rho'$-False Negative Channel ($\rho'$-FNC)}, or a combination of both channels, i.e. a \emph{($\rho,\rho'$)-Binary Asymmetric Channel (($\rho,\rho'$)-BAC)}, it remains an open question to come up with a GT scheme that achieves the asymptotic-optimal bounds for both the number of tests and decoding complexity. One of the best results to-date is the recently proposed Gacha scheme \cite{guruswami2023noise}, which works with the general \emph{$\rho$-Binary Symmetric Channel ($\rho$-BSC)} where the flipping probability of the test results from either 1 to 0 or from 0 to 1, is the same constant $0<\rho<\frac{1}{2}$. Gacha achieves the optimal bound $\mc{O}_\rho (K\log N)$ for the number of tests where $\mc{O}_\rho$ hides a constant that depends on the BSC flipping probability $\rho$, while getting $\mc{O}_\rho(K(\log N)^{3/2})$ for the decoding complexity, within the sparsity regime where $\log K=\mc{O}((\log N)^{1-\epsilon})$ for any chosen $\epsilon>0$. Another important result by Price, Scarlett, and Tan \cite{price2023fast} also works with the general $\rho$-BSC and they achieve the bound $\mc{O}(\epsilon^{-1}K\log N)$ for the number of tests where $\epsilon>0$ is a constant parameter while getting $\mc{O}(\epsilon^{-1}(K\log N)^{1+\epsilon})$ for the decoding complexity.

In this paper, inspired by ideas from various existing NAPGT schemes \cite{guruswami2023noise,wang2023quickly,cai2017efficient,price2023fast,atia2012boolean,price2020fast,cheraghchi2020combinatorial}, we propose two NAPGT schemes. Our first scheme assumes a $\rho$-FPC, in which we propose a new testing design, achieving $\mc{O}(K\log N)$ for both the number of tests and the decoding complexity. Our second scheme considers the sublinear sparse regime $K=\Theta(N^\alpha)$ where $0<\alpha<1$, improving and generalizing the fast binary splitting algorithm by Price-Scarlett-Tan \cite{price2023fast} in two ways. First, under the $\rho$-BSC, we improve the decoding complexity from $\mc{O}(\epsilon^{-1}(K\log N)^{1+\epsilon})$, to $\mc{O}(\epsilon^{-2}K^{1+\epsilon})$ where $\epsilon>0$ is a constant parameter that can be chosen arbitrarily small. Second, under the $(\rho,\rho')$-BAC where $\rho'=\mc{O}(K^{-\epsilon})$, and $\rho$ is a constant with a mild constraint, we can achieve $\mc{O}(\epsilon^{-1}K\log N)$ for both the number of tests and the decoding complexity. 

The following table summarizes some established results in the NAPGT under the $\rho$-BSC, within the sublinear regime $K=\Theta(N^\alpha)$, $0<\alpha<1$, with the exception of Gacha, which only works for the regime $\log K=(\log N)^{1-\epsilon}$. 
\\\\
\begin{tabular}{|c||c|c|c|c|}
    \hline
    Name &  GROTESQUE\cite{cai2017efficient}& Price-Scarlett-Tan\cite{price2023fast} & Gacha\cite{guruswami2023noise} & This paper \\
    \hline
    \# Tests & $\mc{O}(K\log K\log N)$ & $\mc{O}(\epsilon^{-1}K\log N)$ & $\mc{O}_\rho(K\log N)$ & $\mc{O}(\epsilon^{-1}K\log N)$ \\ 
    \hline
    \# Decoding & $\mc{O}(K\log N)$ & $\mc{O}(\epsilon^{-1}(K\log N)^{1+\epsilon})$ & $\mc{O}_\rho(K(\log N)^{3/2})$ & $\mc{O}(\epsilon^{-1}K^{1+\epsilon})$ \\
    \hline
\end{tabular}
\\

\textbf{The paper will be organized as follows.} Section~\ref{sec:Prelim} defines the problem and summarizes some notations that will be used throughout the paper. In addition, it also provides a background on relevant methodologies that will be used in our scheme. Section~\ref{sec:noiselessneon} presents our GT scheme in the noiseless setting and concludes with Theorem~\ref{thm:noiselessneon}. Section~\ref{sec:fpcneon} generalizes Section~\ref{sec:noiselessneon} into the $\rho$-FPC setting and concludes with Theorem~\ref{thm:fpcneon}. Section~\ref{sec:BSCneon} presents our GT scheme under the $\rho$-BSC setting and concludes with Theorem~\ref{thm:maintheoremBSC}. Section~\ref{sec:BACneon} presents our GT scheme under the $(\rho,\rho')$-BAC, and concludes with Theorem~\ref{thm:maintheoremBAC}. Constructions of test matrices and decoding algorithms for NEON are summarized in Algorithms~\ref{alg:noiseless-test}, \ref{alg:noiseless-decoding}, \ref{alg:noisy-test}, \ref{alg:noisy-decoding}.

\section{Preliminaries}\label{sec:Prelim}
\subsection{Group Testing}
Let $N$ denote the \emph{total number of items}, let $K$ denote the \emph{upper bound of the number of defective items}, and let $x$ be an $N$-dimensional binary vector with a Hamming weight less or equal to $K$. 
Recall from the introduction that we will focus on the NAPGT, so we assume each possible instance of the vector $x$ is chosen with the same probability. 

Let $M$ be the \emph{number of tests}, and let $A$ be a $M\times N$ binary matrix which we call the \emph{test matrix}. 
Let $y:=A\circ x$ be the \emph{test result vector}, wherein the ``$\circ$'' operation replaces the usual addition and multiplication to logical-or and logical-and. More specifically, for $i\in\{1,2,..,M\}$, we have 
$$y_i:=\bigvee_{j=1}^N (A_{ij}\wedge x_j).$$

Our goal is to design the test matrix $A$ such that for any input vector $x$ chosen with uniform probability, which corresponds to a defective subset $S$ of items of cardinality less than or equal to $K$, this scheme will utilize the test result vector $y$ and generate an estimate $\widehat{S}$ such that $\mathbb{P}(\widehat{S}\neq S)<\epsilon_{N,K}$ where $\epsilon_{N,K}\rightarrow 0$ as $N, K\rightarrow \infty$. 

Let $D$ denote the \emph{decoding complexity}, we aim for both $M$ and $D$ to be of scale $\mc{O}(K\log N)$.
We will name our schemes in this paper as \textbf{NEON}, which stands for Noise-resilient, Efficient and near-Optimal testiNg.  

Denote $\alpha$ as the \emph{sparse regime parameter}, i.e. $K=\Theta(N^\alpha)$, which will be a requirement in Section~\ref{sec:BSCneon} and Section~\ref{sec:BACneon}. 
Denote $\rho$ as the \emph{False Positive flipping probability}, and $\rho'$ as the \emph{False Negative flipping probability}.
Throughout the paper, we will use $\log(\cdot)$ to denote the natural logarithm. 

In Noiseless and $\rho$-FPC NEON, we will use local matrices that aim to recover only $ \lfloor\log K\rfloor$ defectives out of $N$ items, so we denote $k:=\log K$.

The following table summarizes the notations that will be used throughout the paper: 

\begin{center}
\begin{tabular}{|c|c|}
\hline
   $N$ & Number of items \\
\hline
   $K$ & Upper bound of Number of defective items \\
\hline
   $M$ & Number of tests \\
\hline
   $D$ & Number of operations in the decoding algorithm \\
\hline
   $\alpha$ & The sparse regime parameter. i.e. $K=\Theta(N^\alpha)$, used in Section~\ref{sec:BSCneon} and~\ref{sec:BACneon}\\
\hline
   $\rho$ & False Positive flipping probability \\
\hline
   $\rho'$ & False Negative flipping probability \\
\hline
   $k$ & $\log K$ \\
\hline
\end{tabular}
\end{center}

\subsection{Background}
In this section, we give a brief description of a few existing group testing algorithms that we will later rely on for our constructions.

\subsubsection{Gacha scheme}\label{sec:gacha}
 In the Gacha scheme \cite{guruswami2023noise}, a test matrix is constructed such that each column can be viewed as a codeword of a certain block length. 
In other words, the test matrix can be viewed as a \emph{symbol matrix}. The scheme is defined by three parameters: $\mc{L}$, $\mc{S}$ and $C$.

$\mc{L}$ denotes the block length, and $\mc{S}$ denotes the number of binary entries to represent each symbol. We have $M=\mc{L}\cdot\mc{S}$. In the case of Gacha, these were $\mc{L}=8K\sqrt{\log_2 N}$ and $\mc{S}=7\sqrt{\log_2 N}$. 
\begin{example}\label{exp:bl}
Take $N=5$ and $\mc{L}=3$. Suppose the original test matrix viewed as a symbol matrix is: 
$$
\begin{bmatrix}
    A&B&C&L&D
    \\
    K&N&O&F&G
    \\
    H&J&I&M&E
\end{bmatrix}
$$

The codewords, from left to right, are $AKH, BNJ,COI,LFM,$ and $DGE$, respectively. 

In this case, we can take $\mc{S}=4$. Using the binary representation of the first fifteen letters, i.e. $A\rightarrow 0001$, $B\rightarrow 0010$, ..., $O\rightarrow 1111$. The original test matrix will be the following:
$$
\begin{bmatrix}
    0&0&0&1&0
    \\[-3pt]
    0&0&0&1&1
    \\[-3pt]
    0&1&1&0&0
    \\[-3pt]
    1&0&1&0&0
    \\[3pt]
    1&1&1&0&0
    \\[-3pt]
    0&1&1&1&1
    \\[-3pt]
    1&1&1&1&1
    \\[-3pt]
    1&0&1&0&1
    \\[3pt]
    1&1&1&1&0
    \\[-3pt]
    0&0&0&1&1
    \\[-3pt]
    0&1&0&0&0
    \\[-3pt]
    0&0&1&1&1
\end{bmatrix}.
$$
\end{example}

The third parameter in Gacha, the ``circling parameter'' $C$ (which equals $4\sqrt{\log_2 N}$ in Gacha), refers to the number of symbols in each column of the original test matrix that are kept, while the rest of the symbols are zeroed out. The resulting matrix would be the (final) \emph{test matrix for Gacha}. 

\begin{example}\label{exp:bl2}
(Continuation of Example~\ref{exp:bl}) Suppose $C=1$ in this example, and suppose we circle the 1st, 3rd, 3rd, 2nd, and 1st symbol in columns 1 to 5, respectively, after the circling and zeroing process, the original test matrix will become the following test matrix: 
$$
\begin{bmatrix}
    A&0&0&0&D
    \\
    0&0&0&F&0
    \\
    0&J&I&0&0
\end{bmatrix}
$$
\end{example}

In Noiseless and $\rho$-FPC NEON, we will regard our test matrix $A$ as a vertical concatenation of $\mc{L}$ blocks, where each block has $\mc{S}$ rows and each column within a block is regarded as a symbol. Then we will circle $C$ symbols in each column while zeroing other symbols. 

\subsubsection{Fast binary splitting}\label{sec:fbs}
In the fast binary splitting algorithm~\cite{price2020fast,cheraghchi2020combinatorial,wang2023quickly,price2023fast}, a test matrix is constructed based on a binary tree of depth $\log_2 N$, where each node represents a group of items. Note that we can associate each item with a binary string of length $\log_2 N$, from item $[00...0]$ to item $[11...1]$. 

The top node of this binary tree represents the whole set of items, and its two children will each represent all items that begin with $0$ and $1$. The two children of the node $0$ will represent items that begin with $00$ and $01$, and the two children of the node $1$ will represent items that begin with $10$ and $11$. This representation continues to every node of the tree, so for the last level of the tree, each node represents a single item. For convenience and without loss of generality, we suppose $\log_2(K)$ is an integer. In the following discussion, we will use the constructions in~\cite{wang2023quickly}, which consists of the following two phases: 

The first phase, named ``grow-and-prune'', contains a total of $16K\log_2(N/K)$ tests. Starting at level $\log_2(K)+1$ of the binary tree to the final level, each level consists of $16K$ tests where each node of the level is assigned to one of these tests uniformly at random. When we say a node is assigned to a test, we mean that all the items represented by this node will be included in this test. Regarding the constant 16 here, ~\cite{wang2023quickly} showed that it can be modified to $\log(2-4\epsilon)^{-2}$ for any $\epsilon>0$. We will refer to this constant as $\zeta$ in our NEON scheme. 

The second phase, named ``leaf-trimming'', contains a total of $16K\log_2(K)$ tests with labels $(l_1,l_2)$ such that $1\leq l_1\leq \log_2(K)$ and $1\leq l_2\leq 16K$. Each item is assigned to tests $(l,t)$ for every $1\leq l\leq \log_2(K)$ and for any fixed $l$, the $t$ is chosen uniformly random. The ``leaf-trimming'' phase is an important step, as this will significantly contribute to the elimination of false positive errors from the first phase. However, in Noiseless and $\rho$-FPC NEON, as we will see, we will skip this phase because of our new testing design.

Regarding the decoding algorithm, the decoder starts at level $\log_2(K)+1$ of the binary tree and examines the test results of each node at this level. If a node is tested positive, then the decoder continues to examine the test results of the children of this node, until it reaches the final level. As we expect $K$ defectives in total, with high probability, the decoder will terminate within $\mc{O}(K\log_2 N)$ operations. 

In the analysis of this algorithm, we want to keep track of the number of positive nodes at level $l$ during the decoding process. In~\cite{wang2023quickly}, the random variable $N_l$ is used to denote this number minus $K$. To find $N_l$,~\cite{wang2023quickly} defines a probability generating function $F_l(q)$ by $F_{\log_2(K)}(q)=1$ and $F_{l+1}(q)=F_l(\frac{15}{16}+\frac{q^2}{16})\cdot q^K$. It has been shown that $F_l'(1)$ equals to the mean of $N_l$, and $N_n<cK$ with high probability for a reasonable constant $c$. For example,~\cite{wang2023quickly} proved that $\mathbb{P}(N_n\geq 5K)<e^{-K}$. In some of our proofs, we will use the notations above. 

\subsubsection{Sub-tree decoding}\label{sec:std} 
We will use the idea of sub-tree decoding from~\cite{price2023fast}. 

Recall from the previous section that the decoding algorithm of the original fast binary splitting algorithm only examines the result of each node. In the noiseless case or the False Positive-only setting this will work properly. However, under the False Negative setting where the test results may flip from $1$ to $0$, such a decoding algorithm will not work, as the decoder will stop at a False Negative node, which can hide many truly positive items. 

To remedy this issue,~\cite{price2023fast} proposed the idea of sub-tree decoding: To justify whether a node is positive or negative, instead of only looking at the test result of this node, the decoder looks down a subtree of height $r$ of this node and studies the results of all $2^r-2$ nodes in this subtree. If there exists a branch of this subtree that more than half of its nodes are positive, then the original node is claimed to be positive. 

This allows the correction of False Negative nodes and by analysis, the decoder in~\cite{price2023fast} can find all defective items with high probability. 

In~\cite{price2023fast}, the height $r$ is chosen to be dependent on the $N$ and $K$. Specifically, $r=\mc{O}(\log K+\log\log N)$, resulting in a sub-optimal decoding complexity. In $\rho$-BSC NEON, we will instead choose $r=\mc{O}(\log K)$ and apply a slightly different decoding method, resulting in an improvement of a factor up to $\mc{O}((\log N)^{1+\epsilon})$ in decoding complexity. 
On the other hand, in $(\rho,\rho')$-BAC NEON where $\rho'=\mc{O}(K^{-\epsilon})$, we can choose this $r$ to be a constant.

\section{Noiseless NEON scheme}\label{sec:noiselessneon}
Before getting into the $\rho$-FPC NEON scheme, we suppose there is no noise and present the Noiseless NEON scheme. 
In the following section, we start with the construction of one concrete example of the NEON scheme, named as \emph{BasicNEON}. After this, we will generalize BasicNEON to the general Noiseless NEON. 
\subsection{BasicNEON: A first example}\label{sec:firstexample}
To construct the BasicNEON test matrix $A$, we start with a simplified fast binary splitting test matrix $B$, which can be regarded as a local component of $A$. 

By the word simplified, we mean that the matrix $B$ will be constructed without the leaf trimming phase (see Section~\ref{sec:fbs}), and only aims to recover up to $k=\log K$ defectives. 

By the result in~\cite{wang2023quickly},  with $\mc{O}(k\log N)$ rows and $N$ columns, such matrix $B$ exists to recover a superset of size $<5k$ of up to $k$ defectives with high probability. 


The reason for skipping the leaf trimming phase in choosing the matrix $B$ is that the failure probability will be much higher if we include the leaf trimming phase. On the other hand, excluding the leaf trimming phase will lead to superset recovery rather than the exact recovery. This issue will be solved by our decoding algorithm. 

Having the matrix $B$, we construct the test matrix $A$ using copies of $B$ as building blocks. Recall from Section~\ref{sec:gacha}, our matrix $A$ will be a vertical concatenation of $\mc{L}$ blocks, where each block has $\mc{S}$ rows and each column of the block is regarded as a symbol. In BasicNEON, we take the blocks to be the matrix $B$, and we concatenate $\mc{L}=\frac{\lambda K}{k}$ copies of $B$ vertically, each with $\mc{S}=\zeta kn$ rows, where $\lambda$ and $\zeta$ are constants. Label the blocks from $B_1$ to $B_{\mc{L}}$. Now, we get our initial matrix $A'$.  

After getting $A'$, we apply the circling process: We circle $C$ symbols in each column of $A'$ uniformly random, where $C$ is a constant. In BasicNEON, we take $C:=10$. Then we zero out uncircled symbols to get our test matrix $A$. 

The figure below illustrates our construction: 
\begin{center}
\includegraphics[height=6cm,width=18cm]{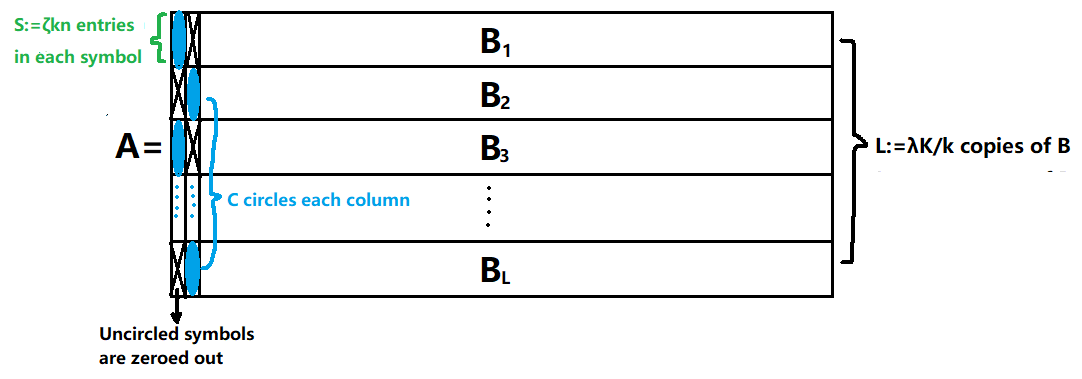}
\end{center}

\emph{For the decoding process,} we will decode locally and combine the local results to get our global result. 

Specifically, we use the same decoder for each block $B_j$ where $j=1,2,...,\mc{L}$, and we call this \emph{local decoder} (LD). Let $\mc{D}_j$ be the decoding result from the block $B_j$. 

For each item that appeared in at least one of the $\mc{D}_j$s, we count the total number of $\mc{D}_j$s that it is in. In the final stage of the decoding algorithm, we claim that an item is defective only if it is in $C$ of the $\mc{D}_j$s. 

The power of this idea is to filter out the error of one single decoder by other decoders. For example, $\mc{D}_1$ may wrongly claim the $i$-th item to be defective but the $i$-th item is, in fact, non-defective. The undesired event that many other $\mc{D}_j$s will include this item is improbable. 

On the other hand, for a truly defective item, $C$ of the $\mc{D}_j$s will claim it to be defective, as we circle $C$ symbols in each column, and there is no false negatives. 

In Section~\ref{sec:initial} and Section~\ref{sec:errorprob}, we will analyze the BasicNEON scheme in detail, and the following example illustrates our decoding algorithm on a global perspective:
\begin{example} In this example, we take $N=18$, $\mc{L}=22$, and $K=5$, where the items 6-10 are defective, denoted by \textcolor{blue}{blue} entries.
Recall that $C=10$, so we expect to have $10$ circles for each column for the defectives. From another perspective, for each fixed defective item, we expect to have $10$ decoders finding it. 
Also, for every non-defective item, we expect them to be wrongly found by very few decoders.

The following diagram is a typical realization of this scenario. A \textcolor{green}{green} entry denotes a defective detected by the local decoder, and an \textcolor{orange}{orange} entry denotes a false positive. 

\begin{center}
\includegraphics[height=9cm,width=13cm]{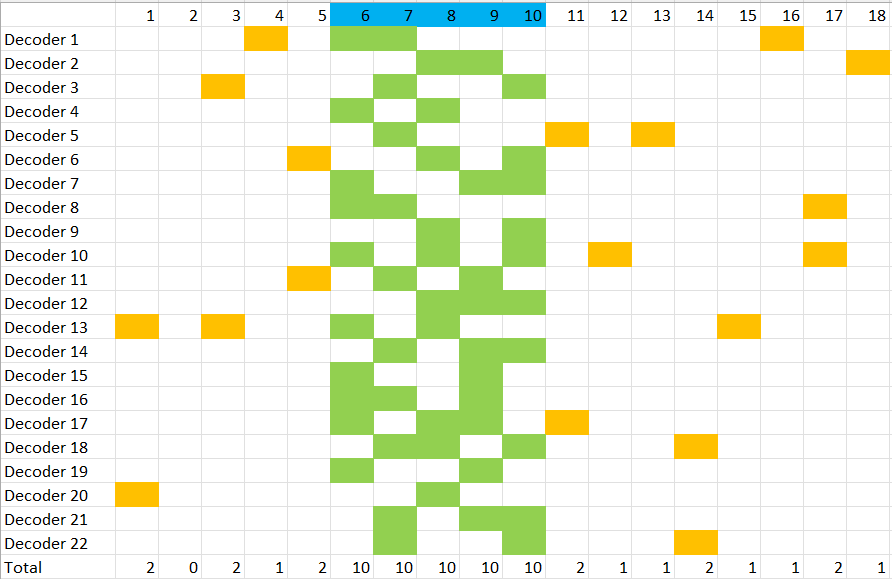}
\end{center}

By our decoding algorithm, an item will be claimed defective if it appears in 10 decoders, and it will be claimed non-defective if it appears in less than 10 decoders. 

In the realization above, we can find the defective items $6-10$ accurately. as all other items appear in $<10$ decoders.
\end{example}

\subsection{Initial analysis of BasicNEON}\label{sec:initial}
Consider the defective part of the test matrix $A$, denote as $A_d$, which has at most $K$ columns.

\emph{First}, we analyze the probability that our scheme is valid. By valid, we mean that each block $B_j$ of $A$ will contain $\leq k$ defectives. In other words, the intersection of every $B_j$ with $A_d$ contains $\leq k$ circles. 

Define $p$ as the probability that one fixed $B_j\cap A_d$ has $>k$ circles. 

Note that a symbol is circled with probability $\frac{C}{\mc{L}}$. Let $X_i$, $i=1,2,...,K$ be Bernoulli random variables Ber$(\frac{C}{\mc{L}})=$ Ber$(\frac{Ck}{\lambda K})$. 
Let $Y=X_1+...+X_K$, so $Y\sim$ Bin$(K,\frac{C}{\mc{L}})$ and $\mathbb{E}[Y]=\frac{Ck}{\lambda}$.
Also let $\delta:=\frac{\lambda}{C}$, then $\mathbb{E}[Y]=\frac{k}{\delta}$. By Chernoff bound, we have 
$$p=\mathbb{P}(Y>k)=\mathbb{P}(Y>\delta \mathbb{E}[Y])\leq \left(\dfrac{e^{\delta-1}}{\delta^\delta}\right)^{\mathbb{E}[Y]}=\left(\dfrac{e^{\delta-1}}{\delta^\delta}\right)^{\frac{k}{\delta}}<\left(\dfrac{e}{\delta}\right)^k.$$
Let $q:=\left(\dfrac{e^{\delta-1}}{\delta^\delta}\right)^{\mathbb{E}[Y]}$ be the upper bound of $p$.
When $\delta \gg e$, $q$ will be $\ll K^{-1}$. 

For instance, when $\delta:=e^3$,  we have $q<e^{-2k}=K^{-2}\ll K^{-1}$. 

In Section~\ref{sec:errorprob}, we will utilize $p<q<K^{-2}$. 



\emph{Next}, we analyze the probability of the appearance of a false positive. 

For a fixed non-defective item, define its $R$-value as the number of local decoders (LDs) that incorrectly claim it to be defective. Specifically, let $V_j$ for $j=1,2,...,\mc{L}$ be Ber$(\gamma)$ for some $\gamma>0$, the indicator of erroneous detection (i.e. claiming this non-defective item as defective) for the $j$-th LD, then $R:=V_1+...+V_\mc{L}$.

We have $\mathbb{E}[R]=\mathbb{E}[V_1]+...+\mathbb{E}[V_{\mc{L}}]=\mc{L}\cdot \gamma$. 
We need $\gamma$ to be as small as possible so that the probability $\mathbb{P}(R>C)$ is very small, to the extent that there does not exist any non-defective item with its corresponding $R$ value being greater than $C$. Recall that in the decoding algorithm of NEON, for each item, defective or non-defective, we will count the number of LDs claiming it to be defective. If this item is defective, we will have $C$ LDs claiming it to be defective. If this item is non-defective, then we need the number of LDs claiming it to be defective to be less than the threshold value $C$, which is equivalent to its $R$-value being less than $C$. 

\begin{lemma}\label{lem:const}
Choose the constant $\zeta:=16$, then $\gamma<\frac{2k}{N}$.   
\end{lemma}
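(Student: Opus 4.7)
The plan is to compute $\gamma$ by conditioning on the deepest level at which any active defective in block $B_j$ shares a level-node with the fixed non-defective item $i$; call this level $L^*$ (set $L^* = -\infty$ if no defective ever shares $i$'s node). The local decoder from Section~\ref{sec:fbs} wrongly claims $i$ iff every node on $i$'s root-to-leaf path between level $\log_2 k + 1$ and level $n = \log_2 N$ tests positive in $B_j$. For $l \leq L^*$ this is automatic, since a defective then lies in $i$'s level-$l$ node. For $l > L^*$, the node tests positive only if its uniformly-chosen test index (out of $\zeta k$) coincides with the test index of one of the at most $k$ defective-occupied level-$l$ nodes. The per-level collision probability is therefore at most $k/(\zeta k) = 1/\zeta$, and the events across distinct levels $l > L^*$ are independent because the grow-and-prune test assignments at different levels are drawn independently.

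Hence, conditional on $L^* = l^*$, the probability that $i$ is wrongly claimed is at most $(1/\zeta)^{n - l^*}$. Combining with the union bound $\mathbb{P}(L^* \geq l^*) \leq k/2^{l^*}$ (each of the at most $k$ active defectives lies in a uniformly random level-$l^*$ node containing $N/2^{l^*}$ items), I obtain
$$\gamma \leq (1/\zeta)^{n - \log_2 k} + \sum_{l^* = \log_2 k + 1}^{n - 1} \frac{k}{2^{l^*}} \cdot (1/\zeta)^{n - l^*}.$$
With $\zeta = 16$, the substitution $j = n - l^*$ turns the inner sum into $(k/N) \sum_{j \geq 1} (2/\zeta)^j$, a geometric series with ratio $1/8$ that sums to at most $k/(7N)$. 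The first term is at most $(k/N)^4$. Therefore $\gamma \leq k/(7N) + (k/N)^4 < 2k/N$, using $k = \log K \ll N$.

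The main obstacle is handling the cross-level dependencies. A naive factorization $\gamma = \prod_l p_l$ with $p_l \leq k/2^l + 1/\zeta$ is not valid, because the level-$l$ positivity events are positively correlated: a defective landing in $i$'s level-$l^*$ node automatically lies in $i$'s level-$l$ node for every $l \leq l^*$. Conditioning on $L^*$ cleanly splits the path into guaranteed-positive levels (the first $L^*$ of them in the relevant range) and collision-driven levels (the remaining $n - L^*$), where on the latter the test assignments are genuinely independent. Once this decomposition is in place, the at-most-$k$ cap on active defectives in $B_j$ (guaranteed with high probability by the validity analysis earlier in Section~\ref{sec:initial}) closes the union bound, and the choice $\zeta = 16$ is just large enough to make the geometric tail sum to $O(k/N)$ with a small constant, leaving comfortable slack to the $2k/N$ threshold.
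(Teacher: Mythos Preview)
Your argument is correct and yields the claimed bound, but it proceeds quite differently from the paper. The paper simply invokes the generating-function recursion of \cite{wang2023quickly}: with $\zeta=16$ the expected number of surviving false-positive leaves satisfies $F'_{l+1}(1)=\tfrac{1}{8}F'_l(1)+k$, whence $F'_n(1)<\tfrac{8k}{7}$ and (using symmetry over items under a uniformly random defective set) $\gamma=\mathbb{E}[N_n]/N<\tfrac{8k}{7N}<\tfrac{2k}{N}$. Your route is more elementary and self-contained: conditioning on the deepest shared level $L^*$ cleanly separates the ``automatically positive'' prefix from the collision-driven suffix, and the independent per-level bound $1/\zeta$ then collapses the suffix into a geometric series. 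This avoids the PGF machinery entirely and gives a sharper constant ($\approx k/(7N)$ rather than $8k/(7N)$). The trade-off is that your bound $\mathbb{P}(L^*\ge l^*)\le k/2^{l^*}$ explicitly uses the randomness of the defective set, whereas the recursion in \cite{wang2023quickly} bounds $\mathbb{E}[N_n]$ for an arbitrary fixed defective set; however, since the paper itself needs the per-item interpretation of $\gamma$ (which already relies on that same symmetry), this is not an additional assumption in context.
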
  

\begin{proof}
    
Referring to the analysis in Section II.E of \cite{wang2023quickly}, notice the similar meaning of the constant $\zeta$ in this paper and the constant 16 in \cite{wang2023quickly}, and note that $\gamma$ means the probability of a fixed non-defective item being wrongly detected, which is $\leq\frac{N_n}{N}$ when $N_n$ is fixed. We have: 
$$\gamma=\dfrac{\mathbb{E}[N_n]}{N}=\frac{F_n'(1)}{N}<\frac{8k}{7N}<\frac{2k}{N}.$$
The notations $N_n$ and $F_l'(1)$ were mentioned in Section~\ref{sec:fbs}. Note that $F_l'(1)$ is defined recursively by $F_{\log_2 k }'(1)=0$ and $F_{l+1}'(1)=\frac{1}{8}\cdot F_l'(1)+k$, so it is clear that for any $l>0$, $F_l'(1)<\frac{8}{7}k<2k$. 
\end{proof} 

Recall in Section~\ref{sec:firstexample}, in BasicNEON, $C:=10$. \emph{Here, we further impose} $\alpha<0.8$ for BasicNEON. 

In the next lemma, we obtain an upper bound for the undesired event $R>C$:
\begin{lemma}\label{lem:basicneon}
    In BasicNEON, we have $\mathbb{P}(R>C)=\mc{O}(N^{-2})$. 
\end{lemma}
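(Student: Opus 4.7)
The plan is to write $R$ as a sum of independent Bernoulli indicators whose total mean is a small negative power of $N$, then bound the tail $\mathbb{P}(R>C)$ by a union bound over $(C+1)$-element subsets of local decoders. Because $C=10$ is a fixed constant while $\mathbb{E}[R]=\Theta(N^{\alpha-1})$ vanishes, this tail will decay like $(\mathbb{E}[R])^{C+1}=\Theta(N^{-11(1-\alpha)})$; the assumption $\alpha<0.8$ is precisely what pushes the exponent $11(1-\alpha)$ past $2$.

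First I would plug in the quantities already in hand: Lemma~\ref{lem:const} gives $\gamma<2k/N$, and by construction $\mc{L}=\lambda K/k$, so
$$\mathbb{E}[R] \;=\; \mc{L}\,\gamma \;<\; \frac{2\lambda K}{N} \;=\; \mc{O}(N^{\alpha-1}).$$
Next, I would observe that the $V_j$'s are independent across $j$, since the blocks $B_1,\ldots,B_{\mc{L}}$ are sampled with independent randomness and the circling is done independently per column. Independence justifies the standard bound
$$\mathbb{P}(R\ge C+1) \;\le\; \binom{\mc{L}}{C+1}\,\gamma^{C+1} \;\le\; \frac{(\mc{L}\gamma)^{C+1}}{(C+1)!} \;\le\; \frac{(2\lambda)^{11}}{11!}\left(\frac{K}{N}\right)^{11}.$$
Finally I would invoke $K=\Theta(N^\alpha)$ with $\alpha<0.8$ to conclude that $(K/N)^{11}=\Theta(N^{-11(1-\alpha)})=\mc{O}(N^{-2.2})=\mc{O}(N^{-2})$.

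The step that deserves the most care is the independence of the $V_j$'s. Each block is built from fresh randomness, so up to the circling the indicators are genuinely independent; the circling constraint that each column is circled in \emph{exactly} $C$ of the $\mc{L}$ blocks introduces a weak negative association within the $C$ entries of a single column, but this only strengthens the union bound above. In the writeup I would make the factorization explicit by conditioning on the circling pattern of the (at most $K$) defective columns and then arguing that, given that pattern, the $V_j$'s for a fixed non-defective item are conditionally independent functions of the independent rows of the $B_j$'s and of the circling of that one non-defective column.
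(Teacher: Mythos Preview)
Your argument is correct and follows essentially the same line as the paper's: both bound the upper tail of the Bernoulli sum $R=\sum_j V_j$ under the model $V_j\sim\mathrm{Ber}(\gamma)$, plug in $\mc{L}\gamma<2\lambda K/N$ from Lemma~\ref{lem:const}, and use $\alpha<0.8$ to push the exponent past $2$. The only difference is the choice of tail inequality: the paper applies the multiplicative Chernoff bound and obtains $\mathbb{P}(R>C)\le(e\mathbb{E}[R]/C)^{C}=\mc{O}(N^{-10(1-\alpha)})$, whereas your union bound over $(C{+}1)$-subsets gives $\mathbb{P}(R\ge C{+}1)\le(\mc{L}\gamma)^{C+1}/(C{+}1)!=\mc{O}(N^{-11(1-\alpha)})$, which is one factor of $K/N$ sharper. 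This sharper exponent would even slightly relax the sparsity constraint in the paper's later ``Improvement~1'' (to $\alpha<1-\frac{1+\eta}{C+1}$ rather than $1-\frac{1+\eta}{C}$), though that is not needed here. Your discussion of why the $V_j$'s may be treated as independent is more careful than the paper, which simply posits $V_j\sim\mathrm{Ber}(\gamma)$ and invokes Chernoff without further comment.
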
 
\begin{proof}
     Starting with $R=V_1+...+V_\mc{L}$, by Chernoff bound, for any $m>0$, we have
$$\mathbb{P}(R>m\mathbb{E}[R])<\left(\frac{e^{m-1}}{m^m}\right)^{\mathbb{E}[R]}.$$
From the expression above, pick $m$ such that $m\mathbb{E}[R]=10=C$ and note that $\alpha<0.8$.

By Lemma~\ref{lem:const}, we have
$$\mathbb{P}(R>m\mathbb{E}[R])<\left(\frac{e^{m-1}}{m^m}\right)^{\frac{10}{m}}<\left(\frac{e^{m}}{m^m}\right)^{\frac{10}{m}}=\left( \frac{e}{m}\right)^{10}=\left(\frac{e}{10}\cdot\mathbb{E}[R]\right)^{10}$$
$$=\left(\frac{e}{10}\cdot(\mc{L}\cdot \gamma)\right)^{10}<\left(\frac{e}{10}\cdot \mc{L}\cdot\frac{2k}{N}\right)^{10}=\left(\frac{e}{10}\cdot \frac{\lambda K}{k}\cdot\frac{2k}{N}\right)^{10}=\mc{O}(N^{-10(1-\alpha)})\ll N^{-2}.$$
\end{proof}

\subsection{Error probability analysis of BasicNEON}\label{sec:errorprob}
In this section, we summarize and analyze all possible error events of the BasicNEON scheme and prove that the overall error probability approaches zero when $K\rightarrow\infty$ and $N\rightarrow\infty$. 
\\\\
\textbf{Error 1: There exists a row with more than $k$ circles}

Recall that $p$ is the probability that one fixed $B_j\cap A_d$ has $>k$ circles. 
Thus, the desired event  $B_j\cap A_d$ having $\leq k$ circles happens with probability $1-p$. 

There are $\mc{L}$ $B_j$s, so the probability of all $B_j$s having $\leq k$ circles intersecting with $A_d$ is $\mc{O}((1-p)^\mc{L})$. The big $O$ notation is used here because the events ``$B_j$ has $>k$ circles'' are not strictly independent. 

However, without loss of generality, we suppose the probability above is exactly $(1-p)^{\mc{L}}$. Then, the Error 1 probability is $p_1:=1-(1-p)^\mc{L}$. 

As in Section~\ref{sec:initial}, choosing $\delta:=e^3$, and note that $p<K^{-2}$, we have 
$$(1-p)^\mc{L}=(1-p)^{\frac{\lambda K}{k}}=\mc{O}(1-\frac{\lambda}{K\log K}).$$
$$\Rightarrow p_1=\mc{O}(\frac{\lambda}{K\log K})=\mc{O}(K^{-1}).$$
Note that by changing the constant $\delta$ to be a larger power of base $e$, we can make the error probability $p_1$ to be $\mc{O}(K^{-b})$ for any $b>0$. This will be analyzed in Section~\ref{sec:generalized}. 
\\\\
\textbf{Error 2: There exists a non-defective item (column) with $R>C$.}

Recall that the undesired event $R>C$ happens with probability $<N^{-2}$, by the analysis in Section~\ref{sec:initial}. Now there are at most $N$ items, so that every non-defective item has a $R$-value less than $C$ happens with probability at least $(1-N^{-2})^N = 1-\mc{O}(N^{-1})$.

Hence, the Error 2 probability is 
\[p_2=1-(1-N^{-2})^N=\mc{O}(N^{-1}).\] 


In summary, we get the following theorem: 
\begin{theorem}\label{thm:basicneon} (BasicNEON) For the regime $K=o(N^{0.8})$, there exists a noiseless, exact-recovery probabilistic GT scheme with $\lambda\zeta K\log_2 N=(C*\delta)*\zeta*Kn=10*e^3*16*Kn=160e^3K\log N$ tests and decoding complexity $\mc{O}(K\log N)$, with error probability $<p_1+p_2=\mc{O}(K^{-1}+N^{-1})$.
\end{theorem}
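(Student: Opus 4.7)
The proof amounts to assembling the ingredients prepared in Sections~\ref{sec:firstexample}--\ref{sec:errorprob}: I would state the construction, read off the number of tests, bound the decoding time, and then union bound the two failure events. For the test count, the construction stacks $\mc{L}=\lambda K/k$ blocks of $\mc{S}=\zeta k\log N$ rows each, so $M=\mc{L}\cdot \mc{S}=\lambda\zeta K\log N$. Plugging in $\lambda = C\cdot\delta = 10\cdot e^3$ (using $\delta=e^3$ from Section~\ref{sec:initial}) and $\zeta=16$ (from Lemma~\ref{lem:const}) gives the claimed $160 e^3 K\log N$.

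For decoding complexity, I would split the algorithm into the $\mc{L}$ local decodings followed by the aggregation step. Each local decoder is an instance of the fast binary splitting decoder of Section~\ref{sec:fbs} applied to a matrix with $\mc{O}(k\log N)$ rows whose target sparsity is $k=\log K$, so it runs in time $\mc{O}(k\log N)$ per block. Summing over the $\mc{L}=\lambda K/k$ blocks yields $\mc{O}(K\log N)$ for all local decodings combined. Since each local output is a superset of size $<5k$ and there are $\mc{O}(K/k)$ such outputs, the total number of (item, block) incidences is $\mc{O}(K)$, so the final counting-and-thresholding at $C$ can be performed in $\mc{O}(K)$ time via a hash table, comfortably within the $\mc{O}(K\log N)$ target.

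For error probability, I would union bound the two events isolated in Section~\ref{sec:errorprob}. Event~1 (some block $B_j\cap A_d$ contains more than $k$ defectives, breaking the precondition of its local decoder) has probability $p_1=\mc{O}(K^{-1})$ by the Chernoff computation with $\delta=e^3$ applied across the $\mc{L}$ blocks. Event~2 (some non-defective column has $R\geq C$, producing a spurious defective in the final output) has probability $p_2=\mc{O}(N^{-1})$ by Lemma~\ref{lem:basicneon} together with the sparsity assumption $K=o(N^{0.8})$, which is precisely what drives $N^{-10(1-\alpha)}$ below $N^{-2}$ so that a further union bound over at most $N$ non-defective items still yields $\mc{O}(N^{-1})$. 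On the complement of both events, every true defective lies in exactly $C$ circled blocks whose local decoder (noiseless, hence false-negative-free) includes it, and every non-defective appears in strictly fewer than $C$ local outputs, so global thresholding recovers the defective set exactly.

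The step I expect to warrant the most care is confirming that Events~1 and~2 genuinely exhaust the failure modes under a consistent probability space. In particular, the Chernoff bound for $R$ in Lemma~\ref{lem:basicneon} relies on the per-block false-positive rate $\gamma$ of a noiseless binary-splitting decoder, while the Chernoff bound underpinning $p_1$ concerns only the circling pattern; since the random matrices $B_j$, the circling, and the unknown defective set are mutually independent, the two analyses combine by a plain union bound without any conditioning subtleties, yielding the stated $\mc{O}(K^{-1}+N^{-1})$.
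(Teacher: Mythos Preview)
Your proposal is correct and follows essentially the same route as the paper: the theorem in the paper is stated as a summary of the construction in Section~\ref{sec:firstexample}, the probability computations in Section~\ref{sec:initial}, and the two error events analyzed in Section~\ref{sec:errorprob}, and you assemble exactly those pieces in the same order. Your explicit accounting of the decoding complexity (per-block $\mc{O}(k\log N)$ summed over $\mc{L}=\lambda K/k$ blocks, plus $\mc{O}(K)$ for aggregation) is a detail the paper leaves implicit, but it is the natural justification and not a different argument.
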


\subsection{Generalized Noiseless NEON}\label{sec:generalized}

Now we have established the BasicNEON scheme. The following are potential improvements to this scheme to be generalized into the Noiseless NEON. 

\textbf{Improvement 1:} BasicNEON works for the regime $K=o(N^{0.8})$. Can we generalize the sparse regime parameter $\alpha=0.8$ to a more general form? 

\textbf{Improvement 2:} Can we make the Error 1 probability to be significantly smaller than $\mc{O}(K^{-1})$, say $\mc{O}(K^{-b})$ for any given $b\geq 1$? As for small $K$, the Error 1 probability $p_1\approx \frac{\lambda}{K\log K}$ is not too small as desired. For example, in BasicNEON, $\lambda=C*\delta=200$, and in order for $p_1<0.01$, we need $K>4727$. 

\textbf{Improvement 3:} Can we improve the constant $\zeta$ to be less than $16$?

Next, we deduce the general Noiseless NEON scheme by analyzing these potential three improvements. 

\emph{Regarding Improvement 1,} recall that the sparse regime parameter $\alpha=0.8$ comes from Lemma~\ref{lem:basicneon}, where we had $\mc{O}(N^{-10(1-\alpha))})\ll\mc{O}(N^{-2})$. This is essentially $10(1-\alpha)>2$ which is equivalent to $\alpha<0.8$. 

More generally, we can replace the number $10$ with $C$, and we can also replace the number $2$ with any number greater than one, say $1+\eta$ where $\eta>0$ is given. Then we have 
$$C(1-\alpha)>1+\eta\Rightarrow\alpha<1-\frac{1+\eta}{C}.$$
By a similar analysis of Error 2 in Section~\ref{sec:errorprob} we would get $p_2=\mc{O}(N^{-\eta})$ instead of $\mc{O}(N^{-1})$. 

\emph{Regarding Improvement 2,} recall in Section~\ref{sec:errorprob} that the Error 1 probability $p_1=\mc{O}( \frac{\lambda}{K\log K})$ originates from $p:=K^{-2}$, and the reason for $p=K^{-2}$ originates from $\frac{e}{\delta}=\frac{e}{e^{3}}=e^{-2}$.

If we choose a bigger constant $\delta$, say $\delta=e^{b+2}$ for some positive integer $b$, then we would get $p=K^{-b-1}$, and then $p_1= \mc{O}(\frac{\lambda}{K^{b}\log K})=\mc{O}(K^{-b})$.

\emph{Regarding Improvement 3,} notice that the improvement of Wang, Gabrys and Guruswami \cite{wang2023quickly} from the original Price and Scarlett \cite{price2020fast} is with respect to the constant $\zeta$. We have the following lemma: 

\begin{lemma}\label{lem:newconst} Let $\gamma$ be defined as in (ii), and choose the constant $\zeta:=\log(2-4\epsilon)^{-2}$, where $\epsilon>0$ is small, then $\gamma<\frac{k}{2\epsilon N}$. 
\end{lemma}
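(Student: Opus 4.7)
The plan is to mirror the proof of Lemma~\ref{lem:const}, keeping the identity $\gamma = \mathbb{E}[N_n]/N = F_n'(1)/N$ and updating the recursion for the probability generating function $F_l$ from the $\zeta=16$ special case to the general constant $\zeta = \log(2-4\epsilon)^{-2}$ obtained in~\cite{wang2023quickly}. Since each level of the ``grow-and-prune'' phase now has $\zeta k$ tests rather than $16k$ (in the local, $k$-defective setting), the probability that a fixed non-informative child at level $l+1$ collides with an informative node in the same test scales as $1/\zeta$ instead of $1/16$. Hence, the recursion of Section~\ref{sec:fbs} generalizes to
\[
F_{l+1}(q) \;=\; F_l\!\left(1-\tfrac{1}{\zeta}+\tfrac{q^{2}}{\zeta}\right)\cdot q^{k}, \qquad F_{\log_2 k}(q)=1.
\]

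Next, I would differentiate at $q=1$, using $F_l(1)=1$, to obtain the affine recursion $F_{l+1}'(1) = \tfrac{2}{\zeta}\,F_l'(1) + k$. For $\zeta>2$ this is a contraction with unique fixed point $\tfrac{k\zeta}{\zeta-2}$, and starting from $F_{\log_2 k}'(1)=0$ the sequence $\{F_l'(1)\}_l$ increases monotonically toward (but never reaches) that fixed point. Therefore $F_l'(1) < \tfrac{k\zeta}{\zeta-2}$ for every $l$, which is the direct analogue of the $F_l'(1) < \tfrac{8k}{7}$ bound in Lemma~\ref{lem:const}.

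Finally, I would reduce the conclusion $\gamma < \tfrac{k}{2\epsilon N}$ to the purely numerical inequality $\zeta(1-2\epsilon) > 2$, which, upon substituting $\zeta = \log(2-4\epsilon)^{-2}$, unwinds to
\[
(1-2\epsilon) \;>\; 2\bigl(\log(2-4\epsilon)\bigr)^{2}.
\]
At $\epsilon = 0$ the right-hand side equals $2(\log 2)^{2} \approx 0.961 < 1$, so the inequality is strict; a short derivative check (or Taylor expansion around $\epsilon=0$) shows that it continues to hold on a nontrivial interval of small positive $\epsilon$, which is precisely the regime in which the lemma is stated. Combining this with the fixed-point bound yields $F_n'(1) < \tfrac{k}{2\epsilon}$, and hence $\gamma < \tfrac{k}{2\epsilon N}$.

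The main obstacle is confirming that the generating-function recursion from~\cite{wang2023quickly} really depends on $\zeta$ in the clean form $1-\tfrac{1}{\zeta}+\tfrac{q^{2}}{\zeta}$ when one uses the improved value $\zeta=\log(2-4\epsilon)^{-2}$: the rest of the argument is then a one-line contraction calculation together with the elementary numerical check above. If~\cite{wang2023quickly} obtains this sharper constant via a direct tail bound on $N_n$ rather than the fixed-point-of-expectations computation used here, I would instead import that tail bound and integrate it to recover the same $k/(2\epsilon)$ control on $F_n'(1) = \mathbb{E}[N_n]$, preserving the conclusion.
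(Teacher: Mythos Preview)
Your strategy is exactly the paper's: carry over the identity $\gamma = F_n'(1)/N$, update the generating-function recursion to reflect the new constant, and bound $F_n'(1)$ by the fixed point of the resulting affine map. The difference is in the recursion itself. You guessed the direct generalization $F_{l+1}(q)=F_l\bigl(1-\tfrac{1}{\zeta}+\tfrac{q^2}{\zeta}\bigr)\cdot q^k$ from the $\zeta=16$ case, which then forces the extra numerical check $(1-2\epsilon)>2(\log(2-4\epsilon))^2$. The paper instead invokes from~\cite{wang2023quickly} the recursion
\[
F_{l+1}(q)=F_l\!\left((\tfrac{1}{2}+\epsilon)+(\tfrac{1}{2}-\epsilon)q^2\right)\cdot q^k,
\]
i.e.\ in the improved analysis the branching probability is $\tfrac{1}{2}-\epsilon$ \emph{by construction}, not $1/\zeta$. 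Differentiating at $q=1$ gives $F_{l+1}'(1)=(1-2\epsilon)F_l'(1)+k$, whose fixed point is exactly $k/(2\epsilon)$, so $\gamma<k/(2\epsilon N)$ follows with no side condition. You correctly flagged the form of the recursion as the main obstacle; the resolution is that in~\cite{wang2023quickly} the parameter $\epsilon$ is precisely what pins the branching probability to $\tfrac{1}{2}-\epsilon$, and $\zeta=\log(2-4\epsilon)^{-2}$ is a derived test-count constant rather than the reciprocal of that probability.
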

\begin{proof} The proof will be completely similar to the proof of Lemma~\ref{lem:const}, with the recursive definition of $F_l'(1)$ changed to 
\[F_{l+1}'(1)=(1-2\epsilon)F_l'(1)+k \tag{a}\]
The new recursive definition above comes from the new recursive relation 
\[F_{l+1}(q)=F_l\left((\frac{1}{2}+\epsilon)+(\frac{1}{2}-\epsilon) q^2\right)\cdot q^k \tag{b}\]
The equation (b) comes from changing the equation (1) in Section III, E of \cite{wang2023quickly} with the new constant $\zeta$. 
By taking the derivative of (b) and plugging in $q=1$, we get (a). 
\end{proof}

Notice that, by changing the constant $\zeta$, Lemma~\ref{lem:basicneon} would still hold true, as changing $\gamma$ from $\frac{2k}{N}$ to $\frac{k}{2\epsilon N}$ would not affect its proof. However, by Theorem 1 in \cite{wang2023quickly}, changing $\zeta$ from $16$ to $\log(2-4\epsilon)$ would induce an extra $\epsilon^{-2}$ term in the decoding complexity. 

We conclude the Noiseless NEON with the following theorem: 

\begin{theorem}\label{thm:noiselessneon} (Noiseless NEON) Suppose, $C,\epsilon,\eta>0$, and $b$ is a positive integer. Let $\zeta:=\log(2-4\epsilon)^{-2}$. For the regime $K=o(N^{1-\frac{2(1+\eta)}{C}})$, 
there exists a noiseless, exact-recovery probabilistic GT scheme with $\lambda\zeta K\log N=Ce^{b+2}\zeta K\log N$ tests and decoding complexity $\mc{O}(\epsilon^{-2}K\log N)$, with error probability $<p_1+p_2=\mc{O}(K^{-b}+N^{-\eta})$. 
\end{theorem}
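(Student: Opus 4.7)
The plan is to simply compose the three improvements discussed in Section~\ref{sec:generalized} into a single NEON instance and check that the error analysis of Section~\ref{sec:errorprob} for BasicNEON carries through essentially unchanged, with the constants tracked in the generalized form. Concretely, I would repeat the BasicNEON construction, but with: (a) the circling parameter kept as a free constant $C$ (instead of $C=10$); (b) the auxiliary constant $\delta$ in the circling distribution chosen as $\delta=e^{b+2}$ rather than $e^3$, so that $\lambda=C\delta=Ce^{b+2}$; and (c) the inner simplified binary-splitting block $B$ built using the refined Wang--Gabrys--Guruswami construction with the smaller constant $\zeta=\log(2-4\epsilon)^{-2}$, as justified by Lemma~\ref{lem:newconst}. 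The test matrix $A$ is then still a vertical concatenation of $\mathcal{L}=\lambda K/k$ copies of $B$, followed by the same circling/zeroing procedure, and the decoder is unchanged: each local block is decoded independently, and an item is declared defective iff it appears in at least $C$ of the local output sets.

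For the test count, a single block $B$ has $\mathcal{S}=\zeta k \log N$ rows and there are $\mathcal{L}=\lambda K/k$ blocks, giving $M=\mathcal{L}\cdot\mathcal{S}=\lambda\zeta K\log N=Ce^{b+2}\zeta K\log N$ tests. For decoding complexity, each local decoder runs the WGG fast binary splitting decoder on a matrix aimed at recovering $k=\log K$ defectives; by Theorem~1 of~\cite{wang2023quickly}, its cost is $\mathcal{O}(\epsilon^{-2}k\log N)$, and summing over $\mathcal{L}=\lambda K/k$ blocks yields $\mathcal{O}(\epsilon^{-2}K\log N)$. An $\mathcal{O}(K)$-time postprocessing step (e.g.\ a hash map keyed by item index) then tallies the local outputs and applies the threshold $C$.

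For Error~1, the Chernoff computation in Section~\ref{sec:initial} with the new $\delta=e^{b+2}$ gives $p\leq (e/\delta)^k=e^{-(b+1)k}=K^{-(b+1)}$, so a union bound over the $\mathcal{L}=\mathcal{O}(K/\log K)$ blocks yields $p_1=\mathcal{O}(\mathcal{L}p)=\mathcal{O}(K^{-b})$. For Error~2, Lemma~\ref{lem:newconst} replaces $\gamma<2k/N$ with $\gamma<k/(2\epsilon N)$, so $\mathbb{E}[R]=\mathcal{L}\gamma=\mathcal{O}(K/N)$ (with an $\epsilon^{-1}$ factor absorbed into the implicit constant); the Chernoff bound from the proof of Lemma~\ref{lem:basicneon}, now with a generic $C$ in place of $10$, yields $\mathbb{P}(R>C)=\mathcal{O}((K/N)^C)=\mathcal{O}(N^{-C(1-\alpha)})$. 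The sparsity hypothesis $K=o(N^{1-(1+\eta)2/C})$ pushes the exponent safely above $1+\eta$ (with the factor of $2$ accommodating the $2k/N$-style constants and the $\log N$ factor implicit in $k$), and a union bound over the at-most $N$ non-defective items then gives $p_2=\mathcal{O}(N^{-\eta})$.

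The only step that requires genuine care is the Error~2 argument under the refined $\zeta$: one must confirm that Lemma~\ref{lem:basicneon}'s Chernoff computation depends on $\gamma$ only through $\mathbb{E}[R]=\mathcal{L}\gamma$, so that replacing the bound $\gamma<2k/N$ by $\gamma<k/(2\epsilon N)$ merely shifts constants and leaves the exponent $C(1-\alpha)$ on $N^{-1}$ intact; once this is verified, the three generalized parameters $(C,b,\epsilon)$ decouple cleanly and the theorem follows by combining the bounds on $p_1$ and $p_2$ exactly as in Theorem~\ref{thm:basicneon}.
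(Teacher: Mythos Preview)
Your proposal is correct and follows essentially the same approach as the paper: the paper's own argument for Theorem~\ref{thm:noiselessneon} is precisely the composition of the three improvements in Section~\ref{sec:generalized} (general $C$ in place of $10$ for Error~2, $\delta=e^{b+2}$ for Error~1, and $\zeta=\log(2-4\epsilon)^{-2}$ via Lemma~\ref{lem:newconst} with the $\epsilon^{-2}$ decoding factor inherited from~\cite{wang2023quickly}), and your write-up tracks each of these with the same Chernoff computations. Your closing remark that the Error~2 Chernoff bound depends on $\gamma$ only through $\mathbb{E}[R]=\mathcal{L}\gamma$ is exactly the observation the paper makes after Lemma~\ref{lem:newconst}.
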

\section{Noisy NEON scheme: $\rho$ - False Positive Channel}\label{sec:fpcneon}
Previously, we discussed the Noiseless NEON scheme. In this section, we consider the $\rho$-False Positive Channel ($\rho$-FPC), where each negative test is flipped to a positive test with a constant probability $\rho$. 

Our scheme will be exactly the same as that for Noiseless NEON, but due to the extra false positive flipping probability $\rho$, in the analysis we need to consider this, and we will give a mild restriction on this flipping probability $\rho$ in the main theorem of this section. 

In Section~\ref{sec:toyexample}, we will give a concrete example of our scheme under the $\rho$-FPC when $\rho=\frac{1}{32}$, and in Section~\ref{sec:rhofpcneon}, we will generalize it into the main theorem of this section. 

\subsection{Analysis of the case when $\zeta=32$ and $\rho=\frac{1}{32}$}\label{sec:toyexample} 

Recall that our decoding algorithm will follow a binary splitting tree locally for each block $B_j$. 

Hence, locally, the analysis of our decoding algorithm will follow that in \cite{wang2023quickly} (see Section III.D-F).
The only difference is that a non-defective item may be placed in a false positive test. In the noiseless case, this happens with probability zero, while in this case, it happens with probability $\rho$. 

On the other hand, a non-defective item is mistakenly placed in a truly positive test with probability at most $\frac{1}{\zeta}$ (In the proof of Theorem~\ref{thm:fpcneon}, we will call this probability as the \emph{misplacement probability}). Hence, a non-defective item is claimed non-defective with a probability at least 
$(1-\rho)\cdot (1-\frac{1}{\zeta})=(\frac{31}{32})^2>\frac{15}{16}$.

Having this fact, the analysis of Section III in \cite{wang2023quickly} will follow without any change, further leading the (global) analysis in Section~\ref{sec:noiselessneon} follow smoothly. In other words, our example $\zeta=32$ and $\rho=\frac{1}{32}$ can be regarded the same (in fact, weaker) as that in~\cite{wang2023quickly} where $\zeta=16$ and $\rho=0$. 

\subsection{General $\rho$-FPC NEON}\label{sec:rhofpcneon}
Having seen the example in Section~\ref{sec:toyexample}, we generalize it into the following theorem:
\begin{theorem}\label{thm:fpcneon} ($\rho$-FPC NEON) Suppose we have a $\rho$-False Positive Channel,  parameters $C,\zeta,\epsilon,\eta>0$, such that $\zeta \log(2-4\epsilon)^{2}\geq\zeta\rho-\rho+1$, and a positive integer $b$. For the regime $K=o(N^{1-\frac{2(1+\eta)}{C}})$, 
there exists an exact-recovery probabilistic GT scheme with $\lambda\zeta K\log N=Ce^{b+2}\zeta K\log N$ tests and decoding complexity $\mc{O}(\epsilon^{-2}K\log N)$, with error probability $<p_1+p_2=\mc{O}(K^{-b}+N^{-\eta})$.
\end{theorem}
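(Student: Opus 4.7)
The plan is to reuse the entire construction of Noiseless NEON (Theorem~\ref{thm:noiselessneon}) verbatim --- the blocks $B_j$ of size $\zeta kn \times N$, the $\mc{L} = \lambda K/k$ vertical concatenations, the circling with parameter $C$, and the global decoding rule that declares item defective iff at least $C$ local decoders flag it --- and only adjust the local per-block analysis to absorb the $\rho$-FPC noise. The first observation is that the Error~1 calculation from Section~\ref{sec:errorprob} is entirely independent of the channel: it only depends on the randomness of circling symbols on the $K$ defective columns. Choosing $\delta = e^{b+2}$ as in Section~\ref{sec:generalized}, Error~1 still contributes $p_1 = \mc{O}(K^{-b})$. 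The number of tests $\lambda\zeta K\log N = Ce^{b+2}\zeta K\log N$ and the decoding complexity $\mc{O}(\epsilon^{-2}K\log N)$ are also untouched, because neither involves the noise model --- only the construction and the per-block search cost.

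The substantive step is redoing the bound on $\gamma$, the probability that a single local decoder falsely flags a fixed non-defective item. In the noiseless regime, Lemma~\ref{lem:newconst} rests on the probability-generating-function recursion
\[
F_{l+1}(q) = F_l\!\bigl((1-\mu) + \mu\, q^2\bigr)\cdot q^k, \qquad F_{l+1}'(1) = 2\mu\, F_l'(1) + k,
\]
where with $\zeta := \log(2-4\epsilon)^{-2}$ one gets the noiseless ``misplacement probability'' $\mu = 1/2-\epsilon$ and thus a contraction factor $1-2\epsilon$. Under the $\rho$-FPC, a non-defective test is flagged positive either through the usual misplacement event or through an independent FP flip of its test; combining these, the effective misplacement probability becomes $\mu_{\mathrm{FPC}} = 1 - (1-\rho)\bigl(1-1/\zeta\bigr) = \rho + (1-\rho)/\zeta$. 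Rearranging the hypothesis of the theorem,
\[
\zeta\log(2-4\epsilon)^2 \geq \zeta\rho - \rho + 1 \iff \log(2-4\epsilon)^2 \geq \rho + \tfrac{1-\rho}{\zeta} = \mu_{\mathrm{FPC}},
\]
which is exactly the statement that $\mu_{\mathrm{FPC}}$ is dominated by the noiseless threshold $\log(2-4\epsilon)^2$. Plugging this $\mu_{\mathrm{FPC}}$ into the recursion above in place of $\mu$ yields the same contraction behaviour; the fixed-point bound gives $F_l'(1) < k/(2\epsilon)$ uniformly in $l$, hence $\gamma \leq k/(2\epsilon N)$, exactly matching Lemma~\ref{lem:newconst}.

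With this new bound on $\gamma$, I would then run the Chernoff argument of Lemma~\ref{lem:basicneon} in its generalized form (Improvement~1 of Section~\ref{sec:generalized}): choosing $m$ so that $m\mathbb{E}[R] = C$ and using the regime $K = o(N^{1-(1+\eta)/C})$, the bound $\mathbb{P}(R > C) = \mc{O}(N^{-C(1-\alpha)}) = \mc{O}(N^{-(1+\eta)})$ goes through unchanged because the only input from the local analysis is the bound on $\gamma$. A union bound over at most $N$ non-defective items then yields $p_2 = \mc{O}(N^{-\eta})$, completing the proof. The main obstacle is verifying the algebraic equivalence between the stated hypothesis and the contraction condition $2\mu_{\mathrm{FPC}} \leq 1-2\epsilon$ and, more subtly, justifying that the PGF-recursion machinery of \cite{wang2023quickly} remains valid after inflating the per-step misplacement probability by an independent $\rho$-flip --- i.e.\ confirming that the noise-induced false positives at different nodes/tests are independent enough that no new union-bound loss is incurred in passing from the noiseless to the FPC recursion. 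Everything else is bookkeeping on top of Theorem~\ref{thm:noiselessneon}.
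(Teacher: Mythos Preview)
Your approach is correct and essentially identical to the paper's: compute the inflated misplacement probability $\mu_{\mathrm{FPC}}=1-(1-\rho)(1-1/\zeta)$, observe that the hypothesis $\zeta\log(2-4\epsilon)^2\ge\zeta\rho-\rho+1$ is exactly $\mu_{\mathrm{FPC}}\le\log(2-4\epsilon)^2$, and then invoke Theorem~\ref{thm:noiselessneon} with the effective constant $\zeta':=1/\mu_{\mathrm{FPC}}\ge\log(2-4\epsilon)^{-2}$ in the local analysis while keeping the physical $\zeta$ in the test count. One minor slip: the ``contraction condition $2\mu_{\mathrm{FPC}}\le 1-2\epsilon$'' you flag at the end conflates $\mu_{\mathrm{FPC}}$ (which plays the role of $1/\zeta$ in the \cite{wang2023quickly} machinery) with the recursion parameter $1/2-\epsilon$ appearing in equation~(b) of Lemma~\ref{lem:newconst}; the correct equivalence is the one you already derived, $\mu_{\mathrm{FPC}}\le\log(2-4\epsilon)^2$, and nothing further needs to be checked.
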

\begin{proof}  
It remains to show the equivalence of this setting with that in Theorem~\ref{thm:noiselessneon}.  
Using the arguments in Section~\ref{sec:toyexample}, the misplacement probability, which is $\frac{1}{2}-\epsilon$ in~\cite{wang2023quickly}, should be 
$$1-(1-\rho)(1-\frac{1}{\zeta})=\frac{\zeta\rho-\rho+1}{\zeta}$$
under this setting. 
Hence, the constant $\zeta$ in Theorem~\ref{thm:noiselessneon} should be changed to the inverse of this misplacement probability, i.e. $\frac{\zeta}{\zeta\rho-\rho+1}$. Let $\zeta':=\frac{\zeta}{\zeta\rho-\rho+1}$, recall Lemma~\ref{lem:newconst}, we should require $\zeta'\geq \log(2-4\epsilon)^{-2}$, which is equivalent to $\zeta \log(2-4\epsilon)^{2}\geq\zeta\rho-\rho+1$. 

On the other hand, regarding the number of tests, the old constant $\zeta$ will be kept unchanged. \end{proof} 

\begin{example} For convenience, let's set our target $\zeta'$ to be $16$ instead of $\log(2-4\epsilon)^{-2}$. 

As demonstrated at the end of Section~\ref{sec:toyexample} and combined with Theorem~\ref{thm:fpcneon}, when $\zeta=32$ and $\rho=\frac{1}{32}$, we would get $\zeta'=\frac{32}{32\cdot\frac{1}{32}-\frac{1}{32}+1}$ which is barely bigger than $16$, so the FPC-NEON with $\rho=\frac{1}{32}$ would require about $32Ce^{b+2}K\log N$ tests. 
If we raise the flipping probability, say $\rho=\frac{1}{16}$, then solving $\frac{\zeta}{\zeta\rho-\rho+1}\geq 16$ would give no solution. However, if $\rho$ is slightly smaller than $\frac{1}{16}$, say $\rho=\frac{1}{20}$, then we get $\zeta\geq 76$, so we will need $76Ce^{b+2}K\log N$ tests. 
On the other hand, if $\rho$ is small, then $\zeta$ would be close to $16$. For example, take $\rho:=\frac{1}{256}$, then we can pick $\zeta=17$, leading to the number of tests to be only $17Ce^{b+2}K\log N$. 
\end{example}
\subsection{Summary of the General Noiseless and the $\rho$-FPC NEON Scheme}\label{sec:algo1}
Assume that every parameter satisfies that in Theorem~\ref{thm:noiselessneon}, Theorem~\ref{thm:fpcneon} and the description before.

\begin{algorithm}[htb]
\caption{Noiseless and $\rho$-FPC NEON: Test matrix}\label{alg:noiseless-test}
\begin{algorithmic}[]
    \Require Number of items $N$, upper bound of number of defective items $K$, local matrix $B$, and parameters $\lambda,C$. Let $\mc{L}:=\frac{\lambda K}{\log K}$.
    \State Initialize empty matrix $A$.
    \For{$i=1,2,...,\mc{L}$}
    \State $B_i:=B$
    \State Concatenate $B_i$ to $A$ vertically. 
    \EndFor
    \For{$i=1,2,...,N$}
    \State Randomly Circle $C$ symbols for column $i$, where a symbol is a column of $B_j$ for some $j$.
    \State Make all non-circled entries equal to zero. 
    \EndFor    
    \State \Return $A$
\end{algorithmic}
\end{algorithm}

\begin{algorithm}
\caption{Noiseless and $\rho$-FPC NEON: Decoding algorithm\label{alg:noiseless-decoding}}
\begin{algorithmic}[1]
    \Require Number of items $N$, upper bound of number of defective items $K$, the test matrix $A$, the test result vector $y$, and parameters $\lambda, C,\theta$. Let $\mc{L}:=\frac{\lambda K}{\log K}$.
    \State Partition $y=[y_1;y_2;...;y_{\mc{L}}]$ and $A=[A_1;A_2;...;A_{\mc{L}}]$.
    \State Initialize empty list $R$, which can keep track of the multiplicity of each element. 
    \For{$i=1,2,...,\mc{L}$}
    \State Apply the decoding algorithm of~\cite{wang2023quickly} on $y_i$ and $A_i$, and denote the decoding result to be $R_i$, which is a list consisting of potentially defective items. 
    \State Append $R_i$ to $R$, with multiplicity. 
    \EndFor
    \For{each element in $R$}
    \If{this element has multiplicity smaller than $C$}
    \State Delete this element from $R$. 
    \EndIf
    \EndFor
    \State \Return $R$
\end{algorithmic}
\end{algorithm}
\section{Noisy NEON scheme: $\rho$ - Binary Symmetric Channel}\label{sec:BSCneon}
In the previous sections, we presented the Noiseless NEON and $\rho$-FPC NEON, where there are no false negative noises. In this section, we consider the $\rho$-BSC setting and propose the $\rho$-BSC NEON. We will keep using the same notations, but our construction and analysis will be different, as we introduced the false negative channel. 

To distinguish between the false positive and false negative channels, we will keep using different notations $\rho$ and $\rho'$ in this section. For BSC, take $\rho=\rho'$. 

In the following discussion, we will use the terms ``node" and ``item" interchangeably when a tree node represents a single item. 

Also, we will assume the sublinear sparse regime $K=\Theta(N^\alpha)$ where $0<\alpha<1$, although our scheme can work with sparser regimes. 
\subsection{Preliminaries}\label{sec:prelimbsc}
Using our binary tree model (Section~\ref{sec:fbs}), define a (sub)chain of the tree to be a \emph{defective chain} if every node in this chain should be tested positive. 

The following diagram illustrates a small example where there are $N=16$ items, and items 3 and 14 are defective, giving two defective chains of length 5. 
\begin{center}
\includegraphics[width=17cm, height=5cm]{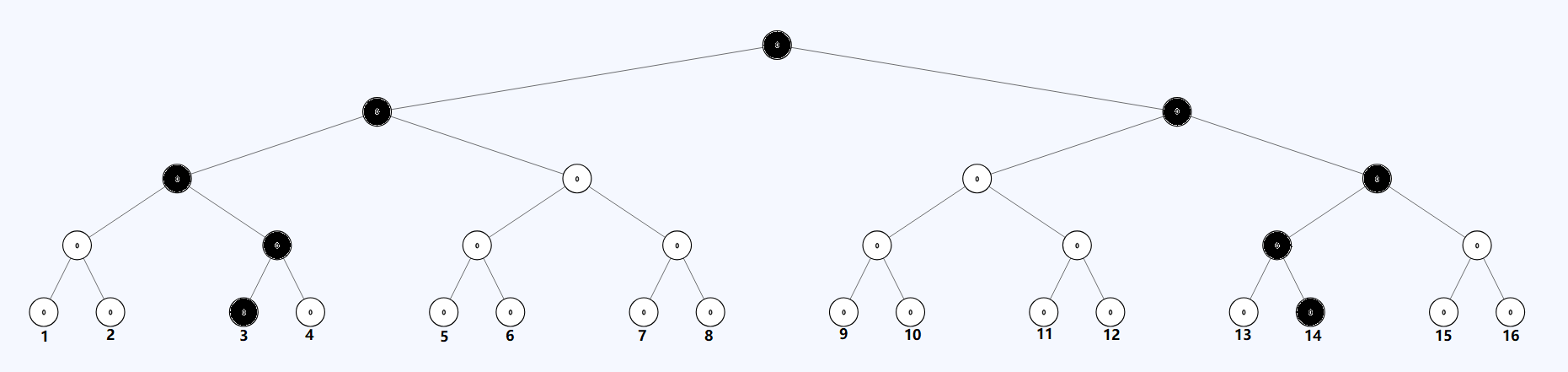}
\end{center}

In a false negative channel, some defective nodes can be falsely claimed to be non-defective, which will prevent the original version of the fast binary splitting decoding algorithm from finding such defective items, as stated in Section~\ref{sec:std}. Hence, in our decoding algorithm, we will examine the subtree of depth $r$ of some selected node, accordingly, compared to the original decoding algorithm, this will increase $D$ by a factor of approximately $2^r$. 

Using a similar idea as that in~\cite{price2023fast}, in our test design and decoding algorithm, we will use the binary tree model with $r$ extra levels. These extra levels do not have further branching and aim to eliminate all false positives. 

The following diagram illustrates the extra levels of our binary tree model: 
\begin{center}
\includegraphics[height=6cm,width=11cm]{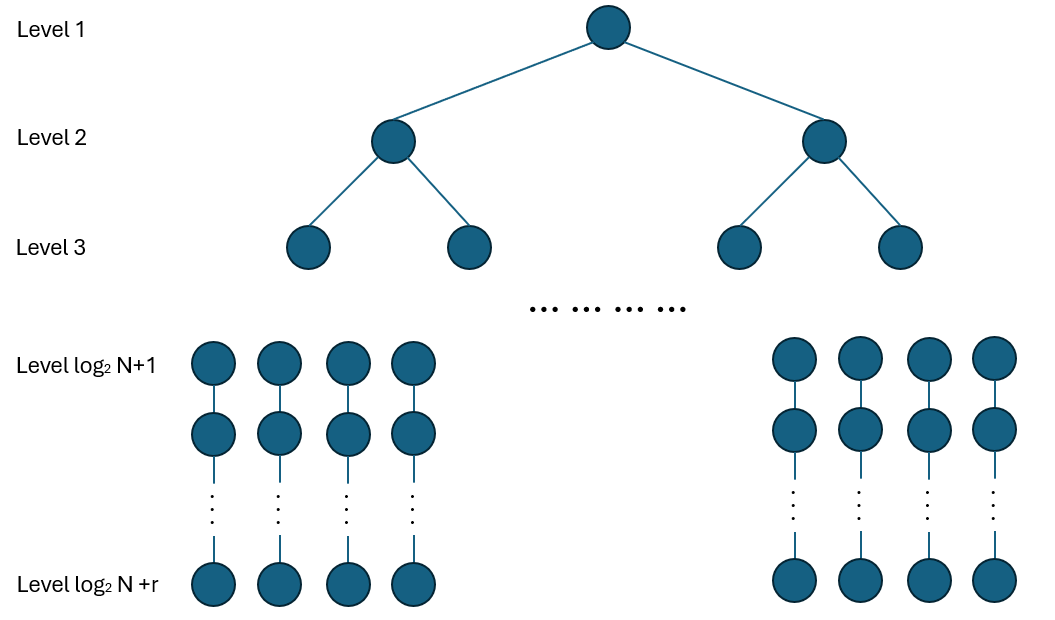}
\end{center}

We will call the level $\log_2 N+r$ as the \emph{ultimate level}, while calling the level $\log_2 N+1$ as the \emph{final level} in our analysis. 

\subsection{Testing procedure}\label{sec:testingprocedure}
Inspired by~\cite{price2023fast}, we will repeat the procedure of random assignment of nodes in $\zeta K$ tests in each level of the binary tree by $C'$ times, here $C'>0$ is a constant. Note that our $C'$ means the same as the parameter $N$ in~\cite{price2023fast}. 

For example, fix a level of our binary tree where there are $N':=9$ nodes. Label the nodes with numbers 1 to 9. Suppose $\zeta K=3$, then a typical realization of the $\zeta K$ tests using the original binary splitting algorithm will be $[1,4,6],[2,7,9],[3,5,8]$, meaning that the first test contains nodes 1,4,6; the second test contains nodes 2,7,9; and the third test contains nodes 3,5,8. 

In our algorithm, we will repeat this procedure by $C'$ times. Suppose $C'=3$, then a typical realization of the $C'\zeta K$ tests will be $[1,5,7],[2,3,9],[4,6,8];[1,2,9],[3,5,8],[4,6,7];[1,3,5],[2,6,8],[4,7,9]$. 

In the original binary splitting algorithm, we claim that a node is defective/non-defective only by looking at the result of the single test in which this node is present, so there is a probability of (at most) $\rho+\frac{1}{\zeta}$/$\rho'$ of this node being false positive/false negative. 

In our binary splitting algorithm, every node participates in $C'$ tests, We will claim this node to be defective only if more than $\frac{C'}{2}$ of these tests are positive. Otherwise, we claim this node to be non-defective. 

Consider the random variable $X':=X'_1+X'_2+...+X'_{C'}$ where each $X'_i\sim$ Ber($\rho'$). 

We have $\mathbb{E}[X']=C'\rho'$, and by Chernoff bound, we obtain
\[\mathbb{P}\left(X'>\frac{C'}{2}\right)=\mathbb{P}\left(X'>\frac{\mathbb{E}[X']}{2\rho'}\right)\leq \left(\frac{e}{\frac{1}{2\rho'}}\right)^{\frac{\mathbb{E}[X']}{2\rho'}}=(2e\rho')^{\frac{C'}{2}}\]

This means that a false negative node is claimed with probability less than $(2e\rho')^{\frac{C'}{2}}$. 

Similarly, a false positive node is claimed with probability less than $(2e(\rho+\frac{1}{\zeta}))^{\frac{C'}{2}}$. 

When $\rho=\rho'$, for the probability above to be less than one, we require
\[2e(\rho+\frac{1}{\zeta})<1.\]
\subsection{Decoding algorithm description and analysis}\label{sec:rhobscdecoding}
Without loss of generality, suppose $K$ is a power of $2$. During the decoding process, we will keep track of the \emph{multiplicity} of a node, which is defined to be the number of defective nodes in its corresponding chain. Also, we define and keep track of the \emph{density} of a node to be its multiplicity divided by its position (depth) in the chain. 

For example, consider the chain $1\rightarrow 1\rightarrow 1\rightarrow 0\rightarrow 0\rightarrow 1\rightarrow 1\rightarrow 1\rightarrow 1\rightarrow 0$, where $1$ represents a defective node and $0$ represents a non-defective node. The last node has a multiplicity $7$ and a density $0.7$.  

Let $p':=(2e\rho')^{\frac{C'}{2}}$ be the lower bound of $\mathbb{P}(X'>\frac{C'}{2})$ (see Section~\ref{sec:testingprocedure}), and let $r:=\epsilon\log_2 K$. Consider the random variable $Z\sim $Bin($r,p'$), which represents the number of false negatives of a given defective chain of length $r$. 

By Chernoff bound, the probability of a given defective chain having more than half of false negative nodes, is bounded by \[\mathbb{P}\left(Z>\frac{r}{2}\right)=\mathbb{P}\left(Z>\frac{1}{2p'}\mathbb{E}[Z]\right)\leq \left(\frac{e}{\frac{1}{2p'}}\right)^{\frac{\mathbb{E}[Z]}{2p'}}=(2p'e)^{\frac{r}{2}}=K^{\frac{\epsilon C'}{4}(\log_2\rho'+\log_2(2e)\cdot\frac{C'+2}{C'})}\] 

For the probability above to be $\ll K^{-\omega}$, for a given parameter $\omega \geq 1$, it is sufficient that \begin{equation}
    \frac{\epsilon C'}{4}\left(\log_2\rho'+\log_2(2e)\cdot\frac{C'+2}{C'}\right)<-\omega 
\end{equation}

Similarly, by a union bound, the probability of the existence of a false positive path of depth $r$ of a given negative node is upper bounded by 
\[K^{\frac{\epsilon C'}{4}(\log_2(\rho+\frac{1}{\zeta})+\log_2(2e)\cdot\frac{C'+2}{C'})}\cdot 2^{r}=K^{\frac{\epsilon C'}{4}(\log_2(\rho+\frac{1}{\zeta})+\log_2(2e)\cdot\frac{C'+2}{C'})+\epsilon}\]

For the probability above to be $\ll K^{-\omega}$ for a given parameter $\omega\geq 1$, it is sufficient that 

\begin{equation}
    \frac{\epsilon C'}{4}\left(\log_2(\rho+\frac{1}{\zeta})+\log_2(2e)\cdot\frac{C'+2}{C'}\right)+\epsilon<-\omega 
\end{equation}

The equations (1) and (2) imply that when $\rho$ and $\rho'$ are constants, $C'=\mc{O}(\epsilon^{-1})$.

\emph{The description of our decoding algorithm is as follows:} Step one of our decoding algorithm starts at the level $\log_2{K}-1$ of the binary tree, with $K$ nodes in this level. Then, we will examine all nodes in the sub-trees of depth $r$ of these $K$ nodes, which has complexity $\mc{O}(C'K\cdot 2^{r})=\mc{O}(\epsilon^{-1}K^{1+\epsilon})$. At depth $r+\log_2 K-1$, we will only keep those nodes with a density greater than $0.5$, and we call these nodes \emph{possibly defective}. 

Consider the following example, where a green node is a defective node found by the testing process. For some of the nodes, their multiplicity/density is written on their upper left/right in color black/red. Our decoding algorithm will find the defectives to be the nodes with a check mark, as their densities are greater than $0.5$. 
\begin{center}
\includegraphics[height=7.5cm,width=16cm]{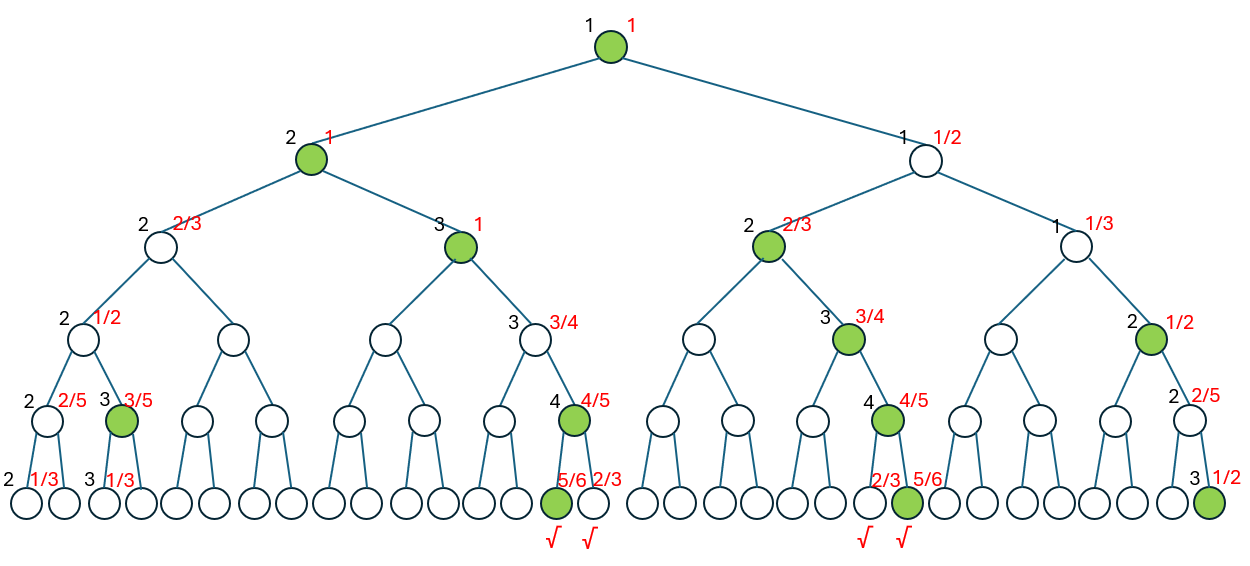}
\end{center}

Next, we will start at all these possibly defective nodes, set the densities and multiplicities to be one, and repeat the step one process, until we reach the final level. Without loss of generality, we will assume that $\log_2(\frac{N}{K})$ divides $r=\epsilon \log_2 K$. 

Finally, we will start at all the possibly defective nodes (items) in the final level and examine the results of nodes in their corresponding chains up to the ultimate level. If more than half of the nodes in the chain are defective, then this node (item) is claimed defective. Otherwise, it is claimed non-defective. 

\emph{Now, we analyze the performance of our decoding algorithm.} At the end of step one, we expect $\mc{O}(K^{1+\epsilon})$ possibly defective nodes. Among them, $\mc{O}(K^{1+\epsilon})$ are false positives and $\mc{O}(K)$ are truly positives. 

On one hand, we want to eliminate all the false positive nodes at the end of step one (and also all false positive nodes at the end of all subsequent steps). By Equation (2), this can be done by setting the parameter $\omega$ to be large enough, so that the total number of false positive nodes collected from all steps is smaller than $K^{\omega}$, and the elimination criterion of false positives at the end of step one is given by reading the densities of all leaf nodes at the end of step two, i.e. if all such leaf nodes have density $<0.5$, then the false positive node is eliminated. 

On the other hand, we want to keep all the truly positive nodes. By Equation (1) and the reasoning in the previous paragraph, this can also be done by setting the parameter $\omega$ to be large enough. By continuing our decoding algorithm, these $\mc{O}(K)$ truly positive nodes will be kept, while an extra $\mc{O}(K^{1+\epsilon})$ false positive nodes are yielded. 

The analysis will follow similarly throughout all steps. At the beginning of the final step, we will have $\mc{O}(K^{1+\epsilon})$ false positive nodes and $\mc{O}(K)$ truly positive nodes. By similar reasoning as before, although we are dealing with chains of length $r$ rather than subtrees of depth $r$, all the false positive nodes will be eliminated and all the truly positive nodes will be kept with high probability. 

The following diagram illustrates the process of our decoding algorithm and some remarks from the analysis:
\begin{center}
\includegraphics[height=8cm, width=14cm]{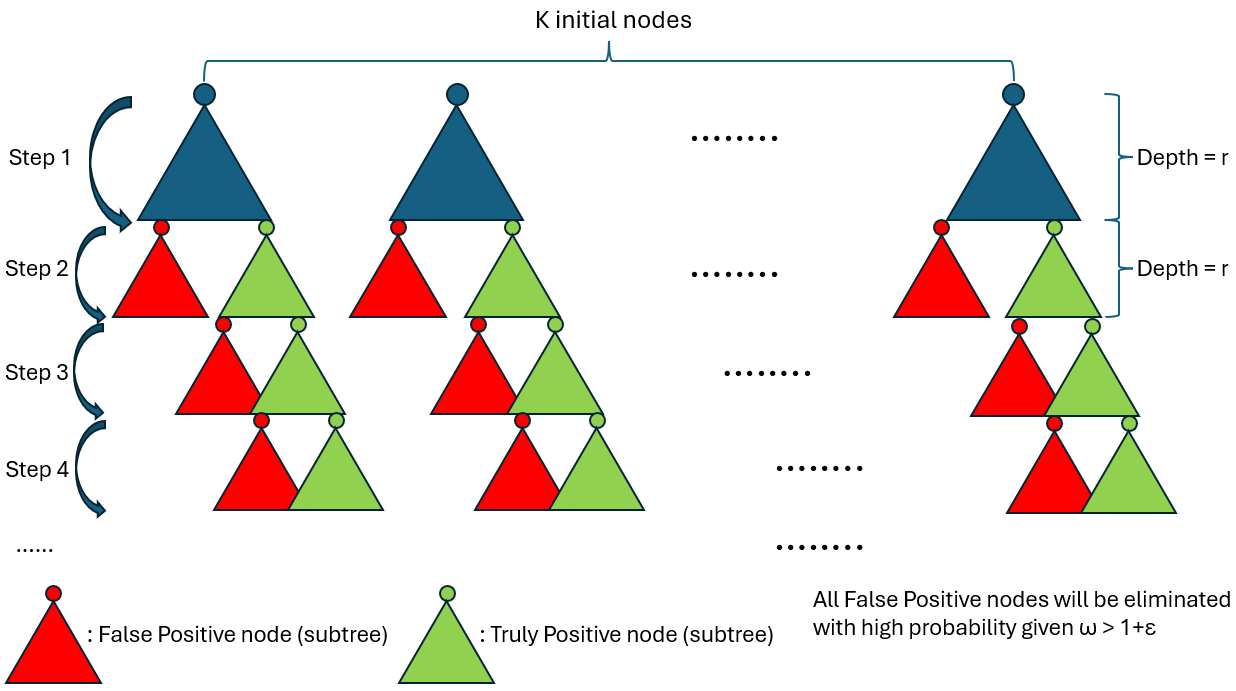}
\includegraphics[height=6cm, width=10cm]{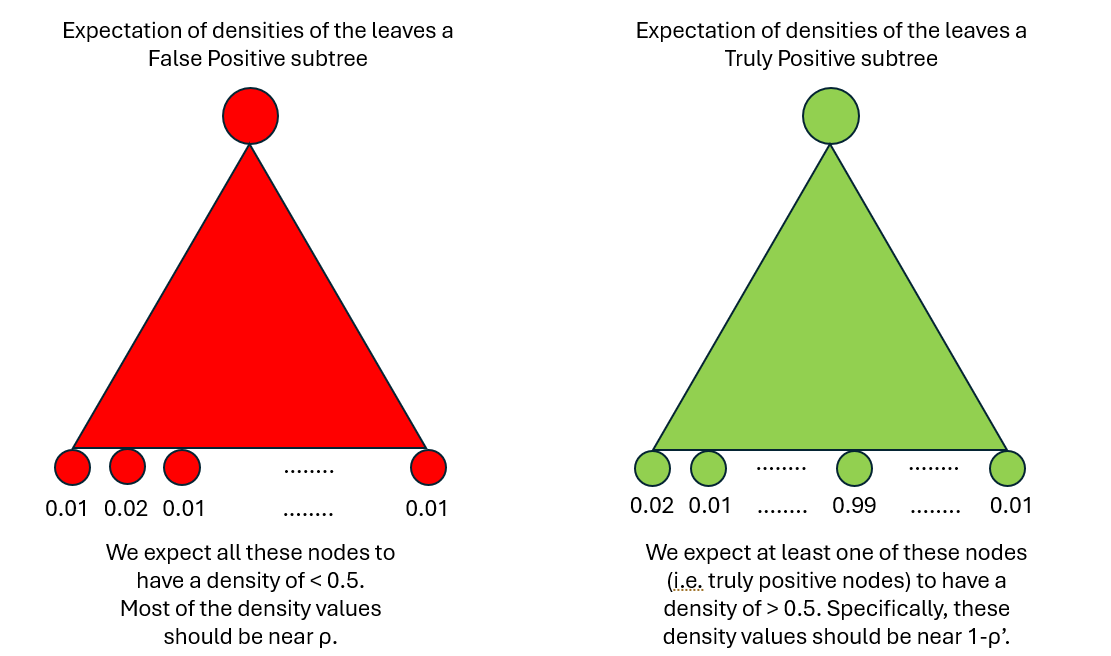}
\end{center}

Note that the number of steps will be \[\frac{\log_2 \frac{N}{K}}{r}+1=\frac{\log_2 \frac{N}{K}}{\epsilon\log_2 K}+1\]
which will be a constant $\leq \frac{\alpha}{\epsilon}+1 $ in the sublinear regime $K=\Theta(N^\alpha)$.

At the end of each step except the final step, under mild conditions, with high probability there will be $\mc{O}(K)$ truly positive nodes and $\mc{O}(\epsilon^{-1}K^{1+\epsilon})$ false positive nodes. Each node induces $2^r=K^\epsilon$ extra nodes to be examined. Hence, altogether there will be $$\mc{O}((\epsilon^{-1}K^{1+\epsilon})\cdot (\frac{\log_2 \frac{N}{K}}{\epsilon\log_2 K}+1)\cdot 2^r)=\mc{O}(\epsilon^{-2}K^{1+2\epsilon})$$ 
nodes to be examined. 
Substitute $2\epsilon$ by $\epsilon$, the above equals $\mc{O}(\epsilon^{-2}K^{1+\epsilon})$. 

Combined with equations (1) and (2), we want to choose the parameter $\omega$ such that $$\epsilon^{-2}K^{1+\epsilon}<K^\omega$$  

Hence, as $K\rightarrow \infty$, choosing any $\omega>1+\epsilon$ would suffice. 


\subsection{Summary of the $\rho$-BSC NEON}\label{sec:summaryrhobscneon}
\emph{Number of tests }($M$): There will be $(\log_2 N+r)-(\log_2 K+1)+1=\log_2(\frac{N}{K})+r$ levels, where $r=\epsilon\log_2 K$, and each level has $\zeta C'K$ tests, and note that $C'=\mc{O}(\epsilon^{-1})$ by equations (1) and (2), so \[M=\zeta C'K(\log_2 (\frac{N}{K})+\epsilon \log_2 K)=\mc{O}(\epsilon^{-1}K\log N).\]
\emph{Decoding complexity }($D$): In step one of our decoding, we will examine $K\cdot 2^r=K\cdot K^\epsilon=K^{1+\epsilon}$ nodes, each by $C'=\mc{O}(\epsilon^{-1})$ times. 

In each of the following steps, we will examine at most $K^{1+\epsilon}\cdot C' =K^{1+2\epsilon}$ nodes, since each of the $K^{1+\epsilon}$ nodes corresponds to $2^r=K^\epsilon$ children to be examined. 

There are $\frac{\log_2 N}{\epsilon\log_2 K}+1=\frac{\alpha}{\epsilon}+1$ steps in total, so
\[D\leq \left(\frac{\log_2 N}{\epsilon\log_2 K}+1\right)\cdot K^{1+2\epsilon}\cdot \mc{O}(\epsilon^{-1})=\mc{O}(\epsilon^{-2}K^{1+2\epsilon})=\mc{O}(\epsilon^{-2}K^{1+\epsilon})\]
\emph{Error probability}: There are two types of error in our scheme which was stated in the previous section. Here we restate them and calculate them in detail: 

\emph{The first type of error} is that there exists a false positive node not being eliminated. There are at most $\mc{O}(K^\omega)$ false positive nodes, where $\omega>1+\epsilon$, each with failure probability at most $K^{\frac{\epsilon C'}{4}(\log_2(\rho+\frac{1}{\zeta})+\log_2(2e)\cdot\frac{C'+2}{C'})+\epsilon}$. 

By union bound, the type one error probability is bounded by $\mc{O}(K^{\nu})$ where $\nu=\frac{\epsilon C'}{4}(\log_2(\rho+\frac{1}{\zeta})+\log_2(2e)\cdot\frac{C'+2}{C'})+\epsilon+\omega$. Choosing appropriate parameters $\rho,\zeta,\epsilon,C'$ will make $\nu$ reasonably small. 

\emph{The second type of error} is that there exists a truly positive node not being kept (i.e. False Negative). There are at most $\frac{\log_2 N}{\epsilon\log_2 K}\cdot K$ truly positive nodes, each with failure probability at most $K^{\frac{\epsilon C'}{4}(\log_2\rho'+\log_2(2e)\cdot\frac{C'+2}{C'})}$.

By comparison, the type two error will be smaller than the type one error. 

In summary, we have the following main theorem of this section: 

\begin{theorem}\label{thm:maintheoremBSC} Suppose we have a Binary Symmetric Channel with flipping probability $\rho$, parameters $C',\zeta,\epsilon>0$ and $0<\alpha<1$. Suppose $2e(\rho+\frac{1}{\zeta})<1$ and $ \frac{\epsilon C'}{4}\left(\log_2(\rho+\frac{1}{\zeta})+\log_2(2e)\cdot\frac{C'+2}{C'}\right)<-(1+2\epsilon)$. For the regime $K=\Theta(N^\alpha)$, there exists an exact-recovery probabilistic GT scheme with $\zeta C'K(\log_2 (\frac{N}{K})+\epsilon \log_2 K)=\mc{O}(\epsilon^{-1}K\log N)$ tests and decoding complexity $\mc{O}(\epsilon^{-2}K^{1+\epsilon})$, with error probability $\mc{O}(K^{\nu})$, where $\nu=\frac{\epsilon C'}{4}(\log_2(\rho+\frac{1}{\zeta})+\log_2(2e)\cdot\frac{C'+2}{C'})+1+2\epsilon$.\end{theorem}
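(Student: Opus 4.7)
The plan is to assemble the theorem from the test design of Section~\ref{sec:testingprocedure} and the decoding analysis of Section~\ref{sec:rhobscdecoding}, verifying each of the three quantities (test count, decoding complexity, error probability) in turn, using the two stated hypotheses at the appropriate points.

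First I would describe the construction. The tree has $\log_2(N/K) + r$ relevant levels, where $r := \epsilon \log_2 K$, beginning at level $\log_2 K - 1$ and ending at the ultimate level $\log_2 N + r$. On each level I place $\zeta K$ tests by independent uniform assignment of nodes, and I repeat the entire level-wise assignment $C'$ times independently, so that every node participates in $C'$ tests. Counting: each level uses $\zeta C' K$ tests, and there are $\log_2(N/K) + r$ levels, giving $M = \zeta C' K(\log_2(N/K) + \epsilon\log_2 K)$; since $C' = \mathcal{O}(\epsilon^{-1})$ by the second hypothesis, this is $\mathcal{O}(\epsilon^{-1} K\log N)$ as required.

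Next I would justify the decoding complexity. The decoder proceeds in $\lceil \log_2(N/K)/r\rceil + 1 = \mathcal{O}(\alpha/\epsilon)$ steps, each of which, starting from at most $\mathcal{O}(K^{1+\epsilon})$ surviving ``possibly defective'' nodes (only $K$ at step one), expands each one into a subtree of depth $r$ and reads off the $C'$ tests per node. This is $\mathcal{O}(K^{1+\epsilon} \cdot 2^r \cdot C') = \mathcal{O}(\epsilon^{-1} K^{1+2\epsilon})$ operations per step. Multiplying by the $\mathcal{O}(\alpha/\epsilon)$ step count and rescaling $\epsilon \mapsto \epsilon/2$, one gets $D = \mathcal{O}(\epsilon^{-2} K^{1+\epsilon})$.

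For the error probability I would distinguish two failure modes, mirroring Section~\ref{sec:summaryrhobscneon}. A node is misread as ``possibly defective'' iff, in a subtree or chain of depth $r$ rooted at it, more than half of the nodes along some descending branch are tested positive; by the Chernoff calculation carried out in Section~\ref{sec:rhobscdecoding} together with a union bound over the $2^r$ branches, this event has probability at most $K^{\frac{\epsilon C'}{4}\left(\log_2(\rho+\frac{1}{\zeta})+\log_2(2e)\cdot\frac{C'+2}{C'}\right)+\epsilon}$ for a fixed non-defective node, and the analogous bound without the $+\epsilon$ term holds for a true defective being lost. The total number of nodes on which these events have to be controlled is at most the total number examined during decoding, namely $\mathcal{O}(K^{1+2\epsilon})$; a single union bound therefore yields total error $\mathcal{O}(K^{\nu})$ with $\nu = \frac{\epsilon C'}{4}\left(\log_2(\rho+\frac{1}{\zeta})+\log_2(2e)\cdot\frac{C'+2}{C'}\right) + 1 + 2\epsilon$. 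The hypothesis that this bracketed quantity is less than $-(1+2\epsilon)$ then forces $\nu < 0$ so the error tends to zero; the other hypothesis $2e(\rho+\frac{1}{\zeta})<1$ is what ensures the base $\rho + \frac{1}{\zeta}$ enters the Chernoff bound with a negative logarithm to begin with.

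The main obstacle is the bookkeeping in the middle paragraph: one must verify that even after $\mathcal{O}(\alpha/\epsilon)$ iterated refinements, the set of surviving ``possibly defective'' nodes stays of size $\mathcal{O}(K^{1+\epsilon})$ with high probability, so that the $\mathcal{O}(K^{1+2\epsilon})$ union bound in the error analysis is actually legitimate and the decoding complexity does not blow up across steps. This requires showing that the conditional expected number of false positives produced in one step, given the input at that step, is controlled by the same Chernoff bound that governs the per-node failure probability; the rest is routine substitution of $C' = \Theta(\epsilon^{-1})$ into the two hypotheses and the quantities $M$, $D$, $\nu$.
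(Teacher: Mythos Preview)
Your proposal is correct and follows essentially the same route as the paper: the test design from Section~\ref{sec:testingprocedure}, the step-wise subtree decoding of Section~\ref{sec:rhobscdecoding} with $r=\epsilon\log_2 K$, the same two Chernoff bounds for false-negative chains and false-positive paths, and the same union-bound accounting summarized in Section~\ref{sec:summaryrhobscneon}. The one bookkeeping slip is that you union-bound the per-\emph{root} failure probability $K^{\ldots+\epsilon}$ against the count $K^{1+2\epsilon}$ of \emph{all examined} nodes, which literally gives $\nu=\ldots+1+3\epsilon$; to land exactly on the stated $\nu=\ldots+1+2\epsilon$ you should either union over the $\mathcal{O}(K^{1+\epsilon})$ surviving root nodes (as the paper does via its parameter $\omega$), or absorb the extra $\epsilon$ into the same rescaling you already invoke for the decoding complexity.
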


\section{Noisy NEON scheme: $(\rho,\rho')$ - Binary Asymmetric Channel}\label{sec:BACneon} 
In this section, we consider the $(\rho,\rho')$ - BAC setting, where the false positive flipping probability $\rho$ is a constant while the false negative flipping probability $\rho'$ is constrained by $\mc{O}(K^{-\epsilon})$. 

\emph{Let's assume $\rho'=\beta K^{-\epsilon}$ in the following analysis. }
\subsection{Decoding procedure and analysis}
Our scheme in this case will essentially be the same as that in the previous section, so we will use the same notations. The only differences are on the false negative flipping probability and the selection of $r$. Here, we choose $r$ to be a constant. These will result in slight changes in the analysis.  

As in the previous section, we will assume $\log_2(\frac{N}{K})$ divides $r$. However, different from before, in the construction of extra levels of this scheme, we will have $C''\log K$ levels, where $C''$ is a constant, instead of just $r$ levels, to eliminate all false positive nodes. In other words, the ultimate level of our tree will be $\log_2 N+C''\log K$, not $\log_2 N+r$. 

\emph{Here we begin our analysis.} First, we analyze the key probability $\mathbb{P}(Z>\frac{r}{2})$ as appeared in Section~\ref{sec:rhobscdecoding}, which denotes the probability of a given truly positive node being wrongly eliminated:
\[\mathbb{P}(Z>\frac{r}{2})\leq (2p'e)^{\frac{r}{2}}=(2e)^{\frac{r}{2}}\cdot (2e\beta)^{\frac{rC'}{4}}K^{-\frac{C'\epsilon r}{4}}\]
This will be smaller than $K^{-\omega}$ as $K\rightarrow\infty$ when
\begin{equation}
    C'\epsilon r>4\omega
\end{equation} 

There will be $\frac{\log_2\frac{N}{K}}{r}+1$ steps in total, at the end of each step there will be $\mc{O}(K)$ truly positive nodes, so the total number of truly positive nodes will be $\leq K(\frac{\log_2\frac{N}{K}}{r}+1)=\mc{O}(K^{1+\mu})$ for any $\mu>0$ (Recall that we are in the sublinear regime $K=\mc{O}(N^{\alpha})$). 

Let $\omega=1+\mu$, then the type one error probability (see Section~\ref{sec:summaryrhobscneon}) will be upper bounded by $\mc{O}(K^\nu)$ for $\nu:=-\frac{C'\epsilon r}{4}+\omega$. 

Note that we do not need to analyze the final step in particular, in which there are chains of length $C''\log K$, and $C''\log K\gg r$. 

Next, we analyze the evolution of false positives, which will be slightly different from the analysis in Section~\ref{sec:BSCneon}, as the number of steps $\frac{\log_2\frac{N}{K}}{r}+1$ is not a constant. 

Consider the probability of a given false positive node not being eliminated, i.e. its corresponding chain has more than half of defective nodes. By a similar analysis as the quantity $\mathbb{P}(Z>\frac{r}{2})$, this probability is less than
\[p_0:=(2e(\rho+\frac{1}{\zeta})^{\frac{C'}{2}})^{\frac{r}{2}}=(2e)^{\frac{r}{2}}\cdot (\rho+\frac{1}{\zeta})^{\frac{C'r}{4}}\]

Following the same decoding algorithm as in Section~\ref{sec:BSCneon}, define the number of false positive nodes at the end of the step $l$ as $N_l$. Note that the same notation was used in Section~\ref{sec:fbs} as well, but here it means differently.  

At the beginning of the decoding algorithm, we start at the $K$ nodes at the level $\log_2 K-1$ of the tree, so we suppose $N_0=K$. Also, as in~\cite{wang2023quickly}, we define the probability generating function $F_l(q)$ of $N_l$ by $F_l(q)=\sum_{j=0}^{\infty} \mathbb{P}(N_l=j)\cdot q^j$.  

By our algorithm, each positive node from a given step will generate a certain number of false positives for the next step. Suppose that this number is $n_f$. Since there are $2^r$ leaf nodes, each with probability $p_0$ to be a false positive. Also, the false positivity of a given leaf node is nearly independent of the other nodes, so we have $\mathbb{P}(n_f=0)=\Theta( (1-p_0)^{2^r})$, $\mathbb{P}(n_f=1)=\Theta( 2^r\cdot p_0(1-p_0)^{2^r-1})$, and in general $\mathbb{P}(n_f=j)=\Theta({2^r\choose j} p_0^{j}(1-p_0)^{2^r-j})$. 

Define $f(q):=((1-p_0)+p_0q)^{2^r}$, and $g(q):=\sum_{j=0}^{2^r}\mathbb{P}(n_f=j)q^j$. By the argument above, $f(q)=\Theta(g(q))$. In the following analysis, we will assume $f(q)=g(q)$ without loss of generality. 

We then have $F_0(q)=q^K$, $F_1(q)=g(q)^K=f(q)^K$, $F_2(q)=f(f(q))^K$, and in general $F_l(q)=f^{(l)}(q)^K$, where $f^{\circ l}(q)$ denotes the function that composes $f(q)$ with itself for $l$ times. 

\begin{lemma}
Suppose $(1+p_0)^{2^r}<2$, then $\mathbb{P}(N_l\geq 2K)\leq 2^{-K}$ for any $l\geq 1$. 
\end{lemma}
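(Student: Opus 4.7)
The plan is to use the probability generating function $F_l(q) = f^{\circ l}(q)^K$ together with a Markov/Chernoff-style tail bound applied at the point $q = 2$. Concretely, for any $t > 1$,
\[\mathbb{P}(N_l \geq 2K) = \mathbb{P}(t^{N_l} \geq t^{2K}) \leq \frac{\mathbb{E}[t^{N_l}]}{t^{2K}} = \frac{F_l(t)}{t^{2K}},\]
so taking $t = 2$ reduces the task to showing $F_l(2) \leq 2^K$, i.e., $f^{\circ l}(2) \leq 2$, where $f(q) = (1 - p_0 + p_0 q)^{2^r}$.

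The second step is an induction on $l$. The base case is exactly the hypothesis: $f(2) = (1+p_0)^{2^r} < 2$. For the inductive step, I would use that $f$ is monotonically increasing on $[0,\infty)$, which is immediate from $f'(q) = 2^r p_0 (1 - p_0 + p_0 q)^{2^r - 1} > 0$. If $f^{\circ l}(2) \leq 2$, then by monotonicity $f^{\circ (l+1)}(2) = f(f^{\circ l}(2)) \leq f(2) < 2$, completing the induction. Hence $f^{\circ l}(2) < 2$ for all $l \geq 1$.

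Combining these two steps yields $F_l(2) = f^{\circ l}(2)^K \leq 2^K$, and therefore
\[\mathbb{P}(N_l \geq 2K) \leq \frac{F_l(2)}{2^{2K}} \leq \frac{2^K}{2^{2K}} = 2^{-K},\]
as claimed.

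Honestly, once the PGF identity $F_l(q) = f^{\circ l}(q)^K$ is in hand, there is no real obstacle; the main conceptual move is recognizing that the hypothesis $(1 + p_0)^{2^r} < 2$ is precisely the statement that $2$ is a (strict) fixed-point-type barrier for the monotone map $f$, which is what makes iterating it safe. The only small subtlety is to use a Markov bound at $t = 2$ specifically (rather than optimizing over $t$), since the hypothesis is tailor-made for this choice and any other threshold would require either a different hypothesis or an explicit fixed-point analysis of $f$.
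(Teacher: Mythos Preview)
Your proposal is correct and essentially identical to the paper's proof: both apply the Markov/Chernoff bound at $q=2$ to reduce to $f^{\circ l}(2)\leq 2$, then use monotonicity of $f$ together with the hypothesis $f(2)=(1+p_0)^{2^r}<2$ to bound the iterates. The paper phrases the latter as the decreasing chain $f^{\circ l}(2)\leq f^{\circ(l-1)}(2)\leq\cdots\leq f(2)<2$ (and also notes $f(1)=1$ and $f''>0$, which are not actually needed), whereas you use an explicit induction, but this is only a cosmetic difference.
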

\begin{proof}
We first show that $F_l(2)\leq 2^K$ for any $l\geq 1$. 
We have 
\[F_l(2)\leq 2^K \Leftrightarrow f^{(l)}(2)^K\leq 2^K \Leftrightarrow f^{(l)}(2)\leq 2. \]

Note that $f(1)=1$, $f(2)=(1+p_0)^{2^r}<2$, $f'(x)>0$, and $f''(x)>0$ on the interval $[1,2]$, so \[f^{(l)}(2)\leq f^{(l-1)}(2)\leq ... \leq f(2)<2.\]

Now we have $F_l(2)\leq 2^K$ for any $l\geq 1$, by the Chernoff bound, we have 
\[\mathbb{P}(N_l\geq 2K)\leq \frac{F_l(2)}{2^{2K}}=2^{-K}.\]
\end{proof}
By the lemma above, at the beginning of the final step, with probability $\geq 1-2^{-K}$, we will have less than $2K$ false positive items. The final step aims to eliminate all of these false positives. For each of them, we will check a chain of length $C''\log K$ and claim it to be positive only if more than half of the nodes in this chain are positive. By Chernoff bound, this happens with probability less than 
\[(2ep_0)^{\frac{C''\log K}{2}}=K^{\frac{C''}{2}\log(2ep_0)}\]

Given the total number of false positives we start is at most $2K=\mc{O}(K)$, we need 
\[\frac{C''}{2}\log(2ep_0)<-1\]

By a union bound, the type two error probability (see Section~\ref{sec:summaryrhobscneon}) is upper bounded by $K^{\frac{C''}{2}\log(2ep_0)+1}$ (Note that the error probability of having $>2K$ false positives at the beginning of final step, which is smaller than $2^{-K}$, is ignored, since it is much smaller than $K^{\frac{C''}{2}\log(2ep_0)+1}$). 

Regarding the decoding complexity, the analysis will be similar to that in Section~\ref{sec:summaryrhobscneon}, but the depth $r$ is a constant rather than $\mc{O}(\log K)$, resulting in a decoding complexity of $\mc{O}(\epsilon^{-1}K\log N)$ instead of $\mc{O}(\epsilon^{-2}K^{1+\epsilon})$.


In summary, we have the following main theorem of this section:
\begin{theorem}\label{thm:maintheoremBAC}
Suppose we have a Binary Asymmetric Channel with constant false positive flipping probability $\rho$ and false negative flipping probability $\rho'$, parameters $C',C'',\zeta,\epsilon,r>0$ and $0<\alpha<1$. Suppose $\rho'=\mc{O}(K^{-\epsilon})$, $2e(\rho+\frac{1}{\zeta})<1$, $(1+(2e)^{\frac{r}{2}}\cdot (\rho+\frac{1}{\zeta})^{\frac{C'r}{4}})^{2^r}<2$, $C'\epsilon r>4$, and $\frac{C''}{2}\log(2ep_0)+1<0$. For the regime $K=\Theta(N^\alpha)$, 
there exists an exact-recovery probabilistic GT scheme with $\zeta C'K(\log_2 (\frac{N}{K})+C''\log K)=\mc{O}(\epsilon^{-1}K\log N)$ tests and decoding complexity $\mc{O}(\epsilon^{-1}K\log N)$, with error probability $\mc{O}(K^{\max\{\nu_1,\nu_2\}})$, where $\nu_1=1-\frac{C'\epsilon r}{4}$ and $\nu_2=\frac{C''}{2}\log(2ep_0)+1$. 
\end{theorem}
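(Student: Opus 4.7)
The plan is to package the analysis already carried out in Section~\ref{sec:BACneon} into the three quantitative claims of the theorem, using the construction described there: a binary tree of depth $\log_2 N + C''\log K$ where each pre-ultimate level carries $\zeta K$ random test assignments repeated $C'$ times, and the ultimate $C''\log K$ levels form a non-branching chain. Decoding proceeds exactly as in Section~\ref{sec:rhobscdecoding} but with constant subtree depth $r$ in the standard steps, followed by a single chain majority-check at the end.

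For the test count, I would sum $\zeta C' K$ tests over the $\log_2(N/K) + C''\log K$ relevant levels; the hypothesis $C'\epsilon r > 4$ with $r$ constant forces $C' = \mc{O}(\epsilon^{-1})$, and in the regime $K = \Theta(N^\alpha)$ both $\log_2(N/K)$ and $\log K$ are $\Theta(\log N)$, yielding $M = \mc{O}(\epsilon^{-1} K \log N)$. For decoding complexity, the $\Theta(\log N)$ standard steps each examine $\mc{O}(K)$ candidates across depth-$r$ subtrees of constant size with $\mc{O}(\epsilon^{-1})$ lookups per verdict, while the final chain step scans $\mc{O}(K)$ survivors over $C''\log K$ positions with the same per-position lookup cost; both contributions come out to $\mc{O}(\epsilon^{-1} K \log N)$.

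For error probability, I would split into two types. The type-one (truly-positive-deleted) rate $\nu_1 = 1 - \frac{C'\epsilon r}{4}$ follows from the Chernoff bound on $Z \sim \mathrm{Bin}(r, p')$ with $p' = (2e\rho')^{C'/2} = (2e\beta)^{C'/2} K^{-\epsilon C'/2}$ derived in Section~\ref{sec:BACneon}, combined with a union bound over the $\mc{O}(K^{1+\mu})$ truly positive nodes accumulated across all steps. The type-two (false-positive-kept) rate $\nu_2 = \frac{C''}{2}\log(2ep_0) + 1$ uses the generating-function lemma to cap by $2K$ the number of false positives reaching the final chain step, then a Chernoff bound $(2ep_0)^{C''\log K/2}$ on each such survivor passing the length-$C''\log K$ majority-check. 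Combining via a union bound and taking the maximum of the two exponents produces the stated $\mc{O}(K^{\max\{\nu_1, \nu_2\}})$.

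The main obstacle is the generating-function step that controls the false-positive branching across steps, namely $F_l(q) = f^{\circ l}(q)^K$ with $f(q) = ((1-p_0) + p_0 q)^{2^r}$. Unlike the BSC regime where the choice $r = \epsilon \log K$ permitted a direct per-step count of $\mc{O}(K^{1+\epsilon})$ false positives, here $r$ is a constant, so one must verify that $f^{\circ l}(2) \leq 2$ uniformly in $l$ using convexity and monotonicity of $f$ on $[1,2]$ together with the hypothesis $(1+p_0)^{2^r} < 2$, so that $\mathbb{P}(N_l \geq 2K) \leq F_l(2)/2^{2K} \leq 2^{-K}$ persists through all $\Theta(\log N)$ steps. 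Once this uniform population bound is in place, the remaining verifications reduce to routine Chernoff bookkeeping.
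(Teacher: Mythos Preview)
Your proposal is correct and follows essentially the same approach as the paper: you use the same construction (constant subtree depth $r$ with an extra $C''\log K$ chain levels), the same generating-function lemma to cap false positives at $2K$ uniformly across steps, and the same Chernoff-plus-union-bound bookkeeping to obtain the two exponents $\nu_1$ and $\nu_2$. Your identification of the generating-function iteration $f^{\circ l}(2)\le 2$ as the one nontrivial ingredient, and your outline of why it holds under the hypothesis $(1+p_0)^{2^r}<2$, match the paper's argument exactly.
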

\subsection{Summary of the $\rho$-BSC NEON Scheme and the $(\rho,\rho')$-BAC NEON Scheme}\label{sec:algobsc}
Assume that every parameter satisfies that in Theorem~\ref{thm:maintheoremBSC}, Theorem~\ref{thm:maintheoremBAC} and the description before.
\begin{algorithm}
\caption{$\rho$-BSC and $(\rho,\rho')$-BAC NEON: Test matrix}\label{alg:noisy-test}
\begin{algorithmic}[]
    \Require Number of items $N$, upper bound of number of defective items $K$, and parameters $C',C'',\zeta,\epsilon,r$. 
    \State Initialize empty matrix $A$ of dimension $M$ by $N$, where $M:=\zeta C'K(\log_2 (\frac{N}{K})+r')$, such that $r'=\epsilon \log_2 K$ for $\rho$-BSC while $r'=C''\log K$ for $(\rho,\rho')$-BAC.
    \For{$i=\log_2 K+1,\log_2 K+2,...,\log_2 N$}
    \For{$j=1,2,...,C'$}
    \For{$l=1,2,...,2^{i-1}$}
    \State Place all items in the $l$-th node into one of the tests $\{u+1,u+2,...,u+\zeta K\}$ uniformly random, 
    \State where $u:=(i-\log_2K-1)\zeta C'K+(j-1)\zeta K$.
    \EndFor
    \EndFor
    \EndFor
    \For{$i=\log_2N+1,\log_2 N+2,...,\log_2N+r'$}
    \For{$j=1,2,...,C'$}
    \For{$l=1,2,...,N$}
    \State Place the $l$-th item into one of the tests $\{u+1,u+2,...,u+\zeta K\}$ uniformly random, 
    \State where $u:=(i-\log_2K-1)\zeta C'K+(j-1)\zeta K$.
    \EndFor
    \EndFor
    \EndFor    
    \State \Return $A$
\end{algorithmic}
\end{algorithm}

\begin{algorithm}[!]
\caption{$\rho$-BSC and $(\rho,\rho')$-BAC NEON: Decoding algorithm}\label{alg:noisy-decoding}
\begin{algorithmic}[1]
    \Require Number of items $N$, upper bound of number of defective items $K$, and parameters $C',C'',\zeta,\epsilon,r$, and outcomes of the tests. 
    \For{$i=1,2,...,K$}
    \State Initialize $S_i:=\{\mc{N}_i\}$, where $\mc{N}_i$ denotes the $i$-th node at level $\log_2K+1$ of the binary tree. 
    \For{$step=1,2,...,\frac{\log_2(\frac{N}{K})}{r}$}
    \State Initialize $S_i'=\varnothing$.
    \For{each node $s\in S_i$}
    \State Do depth first search of the subtree of depth $r$ of $s$. 
    \State For each node, update its density (See Section~\ref{sec:rhobscdecoding}). 
    \State For the $2^r$ leaf nodes, if it has a density $>0.5$, then include this node in $S_i'$. 
    \EndFor
    \State $S_i:=S_i'$. 
    \EndFor
    \EndFor
    \For{each node $s\in S_1\cup S_2$ $\cup$ ... $\cup$ $S_K$}
    \State Check the test results of nodes in the chain corresponding to $s$ (i.e. extra levels in Section~\ref{sec:prelimbsc}).
    \If{more than half of the nodes in the chain are positive}
    \State Include $s$ in $R$. 
    \EndIf
    \EndFor
    \State \Return $R$
\end{algorithmic}
\end{algorithm}
\section{Conclusions and Remarks}
In this paper, we first generalized the original fast binary splitting algorithm from the noiseless setting to the $\rho$-FPC setting, with the idea of local decoding, successfully achieving the optimal asymptotic bounds for both the number of tests $M$ and the decoding complexity $D$. Although applying the original binary splitting scheme with improved constant parameters will also achieve the same objective, our idea is new and can be potentially applied to other problems.  

In the second part, we modified Price-Scarlett-Tan's method so that the decoding complexity is improved by a factor of $\mc{O}((\log N)^{1+\epsilon})$ within the sublinear regime $K=\mc{O}(N^
\alpha)$ where $0<\alpha<1$.  The main idea here is to keep track of the density of the nodes so that we only need to visit each node once, instead of visiting them multiple times as in~\cite{price2023fast}. We also showed that in the case when the false negative flipping probability $\rho'$ is $\mc{O}(K^{-\epsilon})$, we can achieve the asymptotically optimal number of tests and decoding complexity at the same time. 

We conjecture that to find no false negatives, the decoding algorithm using the binary splitting method must examine sub-trees of depth $r=\mc{O}(\log K)$ of at least $K$ nodes, resulting in a decoding complexity at least $\mc{O}(K^{1+\epsilon})$, so our scheme already achieves the best asymptotic bounds of $M$ and $D$ in the binary-splitting based NAPGT scheme, within the regime $K=\Theta(N^\alpha)$ for some $0<\alpha<1$. In other words, it may be impossible to achieve both $M=\mc{O}(K\log N)$ and $D=\mc{O}(K\log N)$ for the $\rho$-BSC NAPGT based on the binary-splitting method. This direction is left to future research. 

\paragraph{Acknowledgement.}
The authors would like to thank Vasileios Nakos for pointing out an error, and a resultant wrong result in a previous version of this paper.

AM would like to thank the Simons Institute  program on Error-correcting codes where some related literature were discussed, and acknowledge NSF award 2217058 for support in research.

\bibliographystyle{alpha}
\bibliography{references}

\newcommand{\etalchar}[1]{$^{#1}$}
\begin{thebibliography}{BMTW84}

\bibitem[AJS19]{aldridge2019group}
Matthew Aldridge, Oliver Johnson, and Jonathan Scarlett.
\newblock Group testing: an information theory perspective.
\newblock {\em arXiv preprint arXiv:1902.06002}, 2019.

\bibitem[AS12]{atia2012boolean}
George~K Atia and Venkatesh Saligrama.
\newblock Boolean compressed sensing and noisy group testing.
\newblock {\em Information Theory, IEEE Transactions on}, 58(3):1880--1901, 2012.

\bibitem[BB23]{brust2023effective}
David Brust and Johannes~J Brust.
\newblock Effective matrix designs for covid-19 group testing.
\newblock {\em BMC bioinformatics}, 24(1):26, 2023.

\bibitem[BMTW84]{berger1984random}
Toby Berger, Nader Mehravari, Don Towsley, and Jack Wolf.
\newblock Random multiple-access communication and group testing.
\newblock {\em IEEE Transactions on Communications}, 32(7):769--779, 1984.

\bibitem[CJBJ17]{cai2017efficient}
Sheng Cai, Mohammad Jahangoshahi, Mayank Bakshi, and Sidharth Jaggi.
\newblock Efficient algorithms for noisy group testing.
\newblock {\em IEEE Transactions on Information Theory}, 63(4):2113--2136, 2017.

\bibitem[CN20]{cheraghchi2020combinatorial}
Mahdi Cheraghchi and Vasileios Nakos.
\newblock Combinatorial group testing and sparse recovery schemes with near-optimal decoding time.
\newblock In {\em 2020 IEEE 61st Annual Symposium on Foundations of Computer Science (FOCS)}, pages 1203--1213. IEEE, 2020.

\bibitem[CS16]{cao2016combinatorial}
Chang-chang Cao and Xiao Sun.
\newblock Combinatorial pooled sequencing: experiment design and decoding.
\newblock {\em Quantitative Biology}, 4:36--46, 2016.

\bibitem[Dor43]{dorfman1943detection}
Robert Dorfman.
\newblock The detection of defective members of large populations.
\newblock {\em The Annals of Mathematical Statistics}, 14(4):436--440, 1943.

\bibitem[FM21]{flodin2021probabilistic}
Larkin Flodin and Arya Mazumdar.
\newblock Probabilistic group testing with a linear number of tests.
\newblock In {\em 2021 IEEE International Symposium on Information Theory (ISIT)}, pages 1248--1253. IEEE, 2021.

\bibitem[GIS08]{gilbert2008group}
Anna~C Gilbert, Mark~A Iwen, and Martin~J Strauss.
\newblock Group testing and sparse signal recovery.
\newblock In {\em Signals, Systems and Computers, 2008 42nd Asilomar Conference on}, pages 1059--1063. IEEE, 2008.

\bibitem[GSTV07]{gilbert2007one}
Anna~C Gilbert, Martin~J Strauss, Joel~A Tropp, and Roman Vershynin.
\newblock One sketch for all: fast algorithms for compressed sensing.
\newblock In {\em Proceedings of the thirty-ninth annual ACM symposium on Theory of computing}, pages 237--246, 2007.

\bibitem[GW23]{guruswami2023noise}
Venkatesan Guruswami and Hsin-Po Wang.
\newblock Noise-resilient group testing with order-optimal tests and fast-and-reliable decoding.
\newblock {\em arXiv preprint arXiv:2311.08283}, 2023.

\bibitem[HD06]{hwang2006pooling}
Frank Kwang-ming Hwang and Ding-zhu Du.
\newblock {\em Pooling designs and nonadaptive group testing: important tools for DNA sequencing}, volume~18.
\newblock World Scientific, 2006.

\bibitem[HSP20]{hogan2020sample}
Catherine~A Hogan, Malaya~K Sahoo, and Benjamin~A Pinsky.
\newblock Sample pooling as a strategy to detect community transmission of sars-cov-2.
\newblock {\em Jama}, 323(19):1967--1969, 2020.

\bibitem[LCPR19]{lee2019saffron}
Kangwook Lee, Kabir Chandrasekher, Ramtin Pedarsani, and Kannan Ramchandran.
\newblock Saffron: A fast, efficient, and robust framework for group testing based on sparse-graph codes.
\newblock {\em IEEE Transactions on Signal Processing}, 67(17):4649--4664, 2019.

\bibitem[LG08]{luo2008neighbor}
Jun Luo and Dongning Guo.
\newblock Neighbor discovery in wireless ad hoc networks based on group testing.
\newblock In {\em 2008 46th Annual Allerton Conference on Communication, Control, and Computing}, pages 791--797. IEEE, 2008.

\bibitem[Maz16]{mazumdar2015nonadaptive}
Arya Mazumdar.
\newblock Nonadaptive group testing with random set of defectives.
\newblock {\em {IEEE} Trans. Information Theory}, 62(12):7522--7531, 2016.

\bibitem[MM24]{matsumoto2024robust}
Namiko Matsumoto and Arya Mazumdar.
\newblock Robust 1-bit compressed sensing with iterative hard thresholding.
\newblock In {\em Proceedings of the 2024 Annual ACM-SIAM Symposium on Discrete Algorithms (SODA)}, pages 2941--2979. SIAM, 2024.

\bibitem[MMP23]{matsumoto2023improved}
Namiko Matsumoto, Arya Mazumdar, and Soumyabrata Pal.
\newblock Improved support recovery in universal one-bit compressed sensing.
\newblock {\em IEEE Transactions on Information Theory}, 2023.

\bibitem[PS20]{price2020fast}
Eric Price and Jonathan Scarlett.
\newblock A fast binary splitting approach to non-adaptive group testing.
\newblock {\em arXiv preprint arXiv:2006.10268}, 2020.

\bibitem[PST23]{price2023fast}
Eric Price, Jonathan Scarlett, and Nelvin Tan.
\newblock Fast splitting algorithms for sparsity-constrained and noisy group testing.
\newblock {\em Information and Inference: A Journal of the IMA}, 12(2):1141--1171, 2023.

\bibitem[VFH{\etalchar{+}}21]{verdun2021group}
Claudio~M Verdun, Tim Fuchs, Pavol Harar, Dennis Elbr{\"a}chter, David~S Fischer, Julius Berner, Philipp Grohs, Fabian~J Theis, and Felix Krahmer.
\newblock Group testing for sars-cov-2 allows for up to 10-fold efficiency increase across realistic scenarios and testing strategies.
\newblock {\em Frontiers in Public Health}, 9:583377, 2021.

\bibitem[WGG23]{wang2023quickly}
Hsin-Po Wang, Ryan Gabrys, and Venkatesan Guruswami.
\newblock Quickly-decodable group testing with fewer tests: Price--scarlett’s nonadaptive splitting with explicit scalars.
\newblock In {\em 2023 IEEE International Symposium on Information Theory (ISIT)}, pages 1609--1614. IEEE, 2023.

\bibitem[YAT{\etalchar{+}}20]{yelin2020evaluation}
Idan Yelin, Noga Aharony, Einat~Shaer Tamar, Amir Argoetti, Esther Messer, Dina Berenbaum, Einat Shafran, Areen Kuzli, Nagham Gandali, Omer Shkedi, et~al.
\newblock Evaluation of covid-19 rt-qpcr test in multi sample pools.
\newblock {\em Clinical Infectious Diseases}, 71(16):2073--2078, 2020.

\end{thebibliography}

\end{document}